\newtheorem{theorem}{Theorem}
\newtheorem{lemma}{Lemma}
\newtheorem{example}{Example}
\newtheorem{prop}{\textbf{Proposition}}
\newtheorem{definition}{\textbf{Definition}}
\newcommand{\congc}[1]{{\color{red}(Cong: #1)}}
\newcommand{\congc}[1]{}
\newcommand{\sumV}[1]{\sum_{\substack{\vec{v}\doteq\mathcal{A} \\ #1}}}
\def\cA{\mathcal{A}}
\def\cC{\mathcal{C}}
\newcommand{\M}[1]{\prod_{k=1}^m f_k(\vec{v}_{#1})}
\newcommand{\fprod}[1]{\prod_{#1}^m f_k(\vec{v}_{1})}
\newcommand{\fprodT}[1]{\prod_{#1}^m f_k(\vec{v}_{2})}
\begin{document}

\title{On Top-$k$ Selection from $m$-wise Partial Rankings via Borda Counting}

\author{Wenjing Chen, Ruida Zhou, Chao Tian, and Cong Shen
\thanks{A preliminary version of this work has been presented at the 2020 IEEE International Symposium on Information Theory (ISIT) \cite{chen2020isit}.}
\thanks{W. Chen, R. Zhou and C. Tian are with Department of Electrical and Computer Engineering, Texas A\&M University, TX, USA. }
\thanks{C. Shen is with the Charles L. Brown Department of Electrical and Computer Engineering, University of Virginia, VA, USA.}
\thanks{The work of C. Shen was supported in part by a Virginia Commonwealth Cyber Initiative (CCI) cybersecurity research collaboration grant.}
}

\maketitle

\begin{abstract}

We analyze the performance of the Borda counting algorithm in a non-parametric model. The algorithm needs to utilize probabilistic rankings of the items within $m$-sized subsets to accurately determine which items are the overall top-$k$ items in a total of $n$ items. The Borda counting algorithm simply counts the cumulative scores for each item from these partial ranking observations. This generalizes a previous work of a similar nature by Shah et al. using probabilistic pairwise comparison data. The performance of the Borda counting algorithm critically depends on the associated score separation $\Delta_k$ between the $k$-th item and the $(k+1)$-th item. Specifically, we show that if $\Delta_k$ is greater than certain value, then the top-$k$ items selected by the algorithm is asymptotically accurate almost surely; if $\Delta_k$ is below certain value, then the result will be inaccurate with a constant probability. In the special case of $m=2$, i.e., pairwise comparison, the resultant bound is tighter than that given by Shah et al., leading to a reduced gap between the error probability upper and lower bounds. These results are further extended to the approximate top-$k$ selection setting. Numerical experiments demonstrate the effectiveness and accuracy of the Borda counting algorithm, compared with the spectral MLE-based algorithm, particularly when the data does not necessarily follow an assumed parametric model. 
\end{abstract}

\section{Introduction}

The problem of rank aggregation has drawn considerable attention due to its diverse applications in information retrieval, recommending system, and social science; see e.g., \cite{brin1998anatomy, baltrunas2010group, caplin1991aggregation,Huin2019, AgrawaliIEEE2017}. The issue at hand usually involves a total of $n$ items, where rankings of the items in certain small subsets are observed, and a target global feature of the full ranking of the $n$ items needs to be determined. 
Initial effort focused on the problem of constructing a full ranking of the $n$ items that is consistent with the partial rankings which have been observed in a deterministic manner \cite{fagin2004comparing}. Later research attention has turned to the probabilistic setting, where instead of deterministic observations on the ranking of a subset of items, ranking results are observed following certain probability law. 

A well studied case is rank aggregation from pairwise comparisons under certain parametric models. The Bradley-Terry-Luce (BTL) model \cite{bradley1952rank} has been adopted by many existing works, e.g., \cite{wauthier2013efficient, negahban2016rank, chen2015spectral,AlsanBTL2018, ArslanIEEE2017}, where each item  has an unknown underlying score, and the probability 
distribution 
of observing one item ranked over the other is simply a
Bernoulli distribution
parametrized by the underlying scores. Under this parametric model, the rank aggregation problem essentially reduces to an estimation problem. 
Instead of pairwise comparisons, Jang et al. \cite{JangIEEE2018} studied the generalized setting of using $m$-wise probabilistic rankings under the parametric Plackett-Luce (PL) model to select the top-$k$ items \cite{plackett1975analysis}.

Algorithms based on explicit parametric models are likely to perform poorly if the data does not match the model.  It is therefore desirable to use algorithms that do not explicitly rely on any parametric model, i.e., using a non-parametric model. In a recent work \cite{shah2017simple}, Shah et al. considered a counting algorithm in such a setting, which simply keeps track of the number of times each item wins in pairwise comparisons. It was shown that the method is accurate and robust for top-$k$ and approximate top-$k$ selections.

In this work, we extend the study of the non-parametric model to investigate rank aggregation from probabilistic $m$-wise partial rankings. The natural algorithm of choice is the Borda counting method \cite{borda1784memoire}. In this method, observations of partial rankings for items within subsets of items are collected. In each (sampled) observation, the item ranked at the $i$-th position within the subset is given a score of $\beta_i$; the eventual overall ranking of the items are produced using the cumulative scores obtained by the items. This method has been used in democratic elections and sports. Two example scenarios are as follows (see e.g., \cite{fraenkel2014borda} and \cite{kaiser2019strategy}): 
\begin{itemize}
    \item In some sports, the athletes are ranked using their competition results in the events that they participate over a period of time. For example, results in different heats at a tournament for bicycle track racing can be used to select athletes into a national team, or in some sports, athlete ranking results from multiple tournaments over a full year can be collected for aggregation to generate an annual ranking. 
    \item In some election systems, the voters are asked to rank the candidates, and the eventual election results are produced using Borda counting for aggregation to find the top candidate or candidates. Two well known scoring systems are when the candidate ranked at the $i$-th position is given a score of $n-i$, sometimes called the tournament system, and when the candidate ranked at the $i$-th position is given a score of $1/i$, sometimes call the Dowdall system.
\end{itemize}

We studied the error probability of the Borda counting method, and establish an almost matching pair of the conditions in a similar manner as those given by Shah et al. \cite{shah2017simple} for the pairwise  setting. We also show that the difference between the achievable bound and the converse bound depends on $m$, as well as the choice of the scoring system in the Borda counting procedure. 

Although the main technical tools and approaches mostly follow that of Shah et al. \cite{shah2017simple}, this generalization is by no means trivial. In the analysis of the error probability, we observe that the error probability of the counting procedure can be upper-bounded differently in the high observation probability regime and low observation probability regime. More precisely, Hoeffding's inequality is more effective for the former while Bernstein's inequality is a better choice for the latter. 
The bound given by Shah et al. relies only on the Bernstein's inequality, and thus the proposed bound is a strict improvement even in the pairwise setting. The new bound also leads to a reduced gap between the upper and lower bounds of the error probability, which is particularly evident in the asymptotic regime of large $n$. These results are further extended to the approximate top-$k$ selection setting with a similar set of upper and lower bounds. Through numerical simulations, we observe that the Borda counting procedure is competitive to the spectral MLE-based algorithm, particularly in the high observation probability regime and on real-world data where parametric models may be a mistmatch.

In summary, our main contributions are: 
    \begin{itemize}
        \item We generalize the non-parametric model from pairwise comparisons to $m$-wise partial rankings. Additionally, justifications for this general non-parametric model are provided.
        \item Second, we establish an upper bound of the error probability, which is in fact tighter in the pairwise case than \cite{shah2017simple}. We also established a lower bound through a more sophisticated construction.
        \item Lastly, we conduct numerical study to compare the Borda counting algorithm with the spectral MLE based algorithm. The results show that the Borda counting algorithm is both  efficient and accurate.
    \end{itemize}

The remainder of this paper is organized as follows. In the next section, we formally introduce the problem. Section \ref{section: top-k selection} presents the upper bound and the lower bound of the error probability and the proof outlines. In Section \ref{section: hamming error} we generalize the results in Section \ref{section: top-k selection} to obtain similar bounds when top-$k$ selection is allowed within certain Hamming distance. Section \ref{section: experiment} presents experiment results to illustrate the performance of the Borda counting-based algorithm in comparison to the spectral MLE-based algorithm. Technical proofs are deferred to the appendix. 

\section{Notation and Preliminaries}
\label{sec: problem setup}

In this section, we introduce the ranking data collection setting, the Borda counting algorithm, and the associated score in the problem. 

\subsection{Probabilistic ranking collection}

There are a total of $n$ items indexed by $[n]:=\{1,2,\ldots,n\}$ in the problem setting. Let $m$ be a positive integer where $2\leq m\leq n$. Observations of partial rankings are collected as follows. For each subset of $[n]$ containing $m$ items, say $\mathcal{A}\subseteq [n]$ with $|\mathcal{A}|=m$, a total of $r$ rounds of data collections are performed. In each round, a random ranking sample of the items in $\mathcal{A}$ is generated. It is then observed with probability $p\in (0,1]$, and not observed with probability $1-p$. The ranking observations in different rounds and for different subsets of items are generated independently.
The underlying unknown probability distribution that generates the random ranking samples in the subset  $\mathcal{A}=\{v_1,v_2,\ldots,v_m\}$ is written as follows. 
\begin{align}
    M_{v_1,v_2,\ldots,v_m}:=\text{Prob}(\text{ranking follows } v_1,v_2,\ldots,v_m ). 
\end{align}
In the sequel, we denote the vector $(v_1,v_2,\ldots,v_m)$ as $\vec{v}$, and write $\vec{v}\doteq\mathcal{A}$ to indicate that $\vec{v}$ is a permutation of the items in the set $\mathcal{A}$. 
It is clear that 
\begin{align}
    \sum_{\vec{v}\doteq\mathcal{A}}  M_{\vec{v}} = 1.
\end{align}
We have a total of $n \choose m$ such probability distributions in our system, one for each subset of $\mathcal{A}$. Note that we do not assume a given generative model from the full ranking of the items to the partial rankings, i.e., these $n \choose m$ distributions are not mutually constraining. The collection of such probability distributions is denoted as $\mathcal{M}$.

\begin{example}
Consider the case $n=4$, and $m=3$. Therefore the possible sets of $\mathcal{A}$ of cardinality 3 are
\begin{align}
    \{2,3,4\},\{1,3,4\},\{1,2,4\},\{1,2,3\}.
\end{align}
Suppose $r=2$, $p=0.5$, then rankings within each of the subsets above are observed with probability $0.5$ in each round. Suppose for the subset $\mathcal{A}=\{1,3,4\}$, the probability distribution is
\begin{align}
    M_{1,3,4}=0.35, M_{1,4,3}=0.1, M_{3,1,4}=0.1, \notag\\
    M_{3,4,1}=0.1, M_{4,1,3}=0.15, M_{4,3,1}=0.2.
\end{align}
Then an observation on the ranking is most likely to be $(1,3,4)$ on this subset. Note that the distribution does not need to follow a parametrized model, and the probability values are not known a priori. 
\end{example}

\subsection{The Borda counting algorithm}

The Borda counting algorithm records the cumulative score of each item in all probabilistic partial ranking observations, where in each observation an item receives a score $\beta_i$ when it is ranked in the $i$-th position. The scoring vector $\vec{\beta}=(\beta_1,\beta_2,\ldots,\beta_m)$ is a non-decreasing sequence and we assume $1=\beta_{1}\geq \beta_{2}\geq ...\geq \beta_{m}\geq 0$. The score $\beta_i$ represents the ``weight" or ``value" of being ranked in position $i$. After $r$ rounds of ranking data collection, the items are then ranked from top to bottom according to the cumulative scores they receive; the estimate of the top-$k$ set is thus the set of items with highest cumulative scores. This set is denoted as $\tilde{\mathcal{S}}_k$.

 The ranking results clearly depend on the particular choice of the scoring systems, i.e., the $\beta$ vector. As mentioned earlier, two popular scoring systems are the tournament-style system where $\beta_{i}=m-i$ and the Dowdall systems where $\beta_i=1/i$. In the former system, obtaining the top ranking is less important than in the latter, and thus if the former scoring system is used in an election, candidates ranked with less  extreme positions 
 will be scored relatively high even if they do not stand out among the candidates. The design of the scoring systems is a frequent point of contention in election systems and in sports, and it involves other less understood factors such as human psychology, which is beyond the scope of this study. We shall assume the scoring system in our problem setting is fixed before hand.

\begin{example}
Continuing our previous example, let us suppose the ranking observations in the two rounds are
\begin{align}
    [(2,1,3), (2,3,1)], (3,2,4), (3,4,1), (1,2,4),
\end{align}
and the scoring system is $\vec{\beta}=(1,0.5,0)$. Note that ranking for the subset $\{1,2,3\}$ was observed twice, and others are only observed once. The eventual cumulative scores of the items are 
\begin{align}
    (1.5,3,2.5,0.5).
\end{align}
Therefore, the ranking of the four items at the end is $(2,3,1,4)$. The top-$2$ items would be $2,3$ using the Borda counting method.
\end{example}

To facilitate subsequent analysis, let us denote the score that item-$a$ receives in the $\ell$-th round in the competition among the items in $\mathcal{A}=a\cup \mathcal{A}^-\subseteq [n]$ as $X_{a,\mathcal{A}^-}^{(\ell)}$, where $\mathcal{A}^-\subseteq [n]$ has cardinality $m-1$. After $r$ rounds, the total score received by item-$a$ is therefore
\begin{equation}
W_a=\sum_{\ell\in{[r]}}\sum_{\mathcal{A}^-\subseteq{[n]}\setminus \{a\}}X_{a,\mathcal{A}^-}^{(\ell)}.
\end{equation}

In the remainder of this work, we use $\mathcal{A}$ to enumerate subsets of $[n]$ with cardinality $m$ without explicitly stating $m$ in the notation; similarly, $\mathcal{A}^-$ is used for those of cardinality $m-1$, and $\mathcal{A}^{--}$ for those of cardinality $m-2$.

\subsection{The associated scores}

Let us consider the observation collection process more carefully. In the probabilistic ranking among the items in $\mathcal{A}$ with $|\mathcal{A}|=m$, denote the probability that item-$a$ ranks at the $t$-th position in the set $\{a\}\cup \mathcal{A}^-$ as $R_{a,\mathcal{A}^-}(t)$, and thus
\begin{align}
R_{a,\mathcal{A}^-}(t)=\sum_{\substack{\vec{v}\doteq\{a\}\cup \mathcal{A}^-\\v_t=a}}M_{\vec{v}}.
\end{align}

Then for each item $a$, the expected cumulative score is 
\begin{align}
S_a=p\sum_{\substack{\mathcal{A}^-\subseteq [n]\setminus\{a\}}}\sum_{t=1}^{m}\beta_{t}R_{a,\mathcal{A}^-}(t).
\end{align}
Since $\beta_i\leq1$, we have
\begin{align}
S_a&\leq p\sum_{\substack{\mathcal{A}^-\subseteq [n]\setminus\{a\}}}\sum_{t=1}^{m}R_{a,\mathcal{A}^-}(t)\notag\\
&\leq p\sum_{\substack{\mathcal{A}^-\subseteq [n]\setminus\{a\}}}1\leq p\binom{n-1}{m-1}.
\end{align}
We define the associated score, i.e., the expected normalized cumulative score of an item $a$ as
\begin{align}
&\tau_a
=\frac{1}{\rho_{n,m}}\bigg{(}\sum_{\substack{\mathcal{A}^-\subseteq [n]\setminus\{a\}}}\sum_{t=1}^{m}\beta_{t}R_{a,\mathcal{A}^-}(t)\bigg{)},
\end{align}
where the normalization factor $\rho_{n,m}$ is 
\begin{align}
&\rho_{n,m}=\binom{n-1}{m-1}.
\end{align}
We will view the ``true" top-$k$ items as the set of $k$ items with the highest associate scores, denoted as $\mathcal{S}_k^*$.

Using this notation, we can write the probability distribution of $X_{a,\mathcal{A}^-}^{(\ell)}$ as
\begin{equation}
\mathbf{Pr}(X_{a,\mathcal{A}^-}^{(\ell)}=\beta)=
\begin{cases}
pR_{a,\mathcal{A}^-}(t) &\beta=\beta_{t},\, t=1,2,\ldots,m;\\
1-p&\beta=0,
\end{cases}
\end{equation}
It is then clear that 
\begin{equation}
\mathbb{E}\left(X_{a,\mathcal{A}^-}^{(\ell)}\right)=p\sum_{t=1}^{m}\beta_{t}R_{a,\mathcal{A}^-}(t),
\end{equation}
and we thus have
\begin{align}
\mathbb{E}W_a=p\rho_{n,m}\tau_a,
\end{align}
i.e., the cumulative score in the Borda counting algorithm of an item $a$ is a scaled unbiased estimator of $\tau_a$.

It should be noted that in non-parametric settings, the concept of a ``true" ranking can be ambiguous, and the total ranking based on the associated scores is simply one such choice.  Nevertheless, we do expect that if the data is indeed generated from a parametric model, the ranking based on the associated score will be consistent with the ranking generated by the model. We indeed have the following reassuring result, whose proof is given in the appendix.
\begin{lemma}
\label{lemma:consistency of para model}
If the partial rankings samples are produced from the PL model, then the ranking using the associated scores is consistent with the ranking of the weight vector, regardless of the score system  $\vec{\beta}$ being used, as long as it satisfies $1=\beta_{1}\geq \beta_{2}\geq ...\geq \beta_{m}\geq 0$.
\end{lemma}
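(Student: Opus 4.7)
Let $w_1,\ldots,w_n>0$ denote the underlying PL weights. It suffices to show that $w_a>w_b$ implies $\tau_a\geq\tau_b$, since this yields a ranking of the $\tau$'s consistent with that of the $w$'s. Recall that $\rho_{n,m}\tau_a$ is a sum over $m$-subsets $\mathcal{A}\ni a$ of the expected Borda score that $a$ receives from the PL ranking of $\mathcal{A}$, and similarly for $\tau_b$. Subsets containing neither $a$ nor $b$ contribute to neither sum, so I split the remaining subsets into (i) those containing both $a$ and $b$ and (ii) those containing exactly one of them, and handle the two classes with different couplings.

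\textbf{Class (ii): exactly one of $a,b$.} For each $(m-1)$-set $C\subseteq[n]\setminus\{a,b\}$, the set $\{a\}\cup C$ appears in $\tau_a$ and $\{b\}\cup C$ in $\tau_b$. I use the exponential-race representation of PL: draw independent $E_j\sim\mathrm{Exp}(w_j)$ for $j\in C$, draw an independent $U\sim\mathrm{Exp}(1)$, and set $E_a=U/w_a$, $E_b=U/w_b$. The variables $E_a$ and $E_b$ have the correct $\mathrm{Exp}(w_a)$ and $\mathrm{Exp}(w_b)$ marginals, so the rank of $a$ in $\{a\}\cup C$ (respectively, $b$ in $\{b\}\cup C$) has the correct PL distribution. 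Since $w_a>w_b$ forces $E_a\leq E_b$ almost surely under this coupling, the rank of $a$ is pointwise no larger than the rank of $b$, and because $\vec{\beta}$ is non-increasing this gives $\sum_t\beta_tR_{a,C}(t)\geq\sum_t\beta_tR_{b,C}(t)$ after taking expectations.

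\textbf{Class (i): both $a$ and $b$.} The coupling above fails because $a$ and $b$ compete in the same ranking, so I switch to a combinatorial pairing. For each permutation $\vec{v}\doteq\mathcal{A}$, let $\vec{v}'$ be the permutation obtained by swapping the labels $a$ and $b$ in $\vec{v}$; this is a fixed-point-free involution, and the aggregate contribution of a pair $\{\vec{v},\vec{v}'\}$ to the difference of expected Borda scores is $(M_{\vec{v}}-M_{\vec{v}'})(\beta_{t_a}-\beta_{t_b})$, where $t_a,t_b$ are the positions of $a,b$ in $\vec{v}$. Assuming WLOG $t_a<t_b$, I will expand the PL formula for both permutations; the factors at positions outside $[t_a,t_b]$ cancel, and after telescoping the product collapses to
\[
\frac{M_{\vec{v}}}{M_{\vec{v}'}}=\prod_{i=t_a+1}^{t_b}\frac{w_a+D_i}{w_b+D_i},\qquad D_i=\sum_{\substack{j\geq i\\ v_j\notin\{a,b\}}}w_{v_j},
\]
which exceeds $1$ since $w_a>w_b$. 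Combined with $\beta_{t_a}\geq\beta_{t_b}$, each pair contributes non-negatively, yielding $\sum_t\beta_tR_{a,\mathcal{A}\setminus\{a\}}(t)\geq\sum_t\beta_tR_{b,\mathcal{A}\setminus\{b\}}(t)$.

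Summing the Class (i) and Class (ii) contributions gives $\tau_a\geq\tau_b$, completing the proof. The most error-prone step is the telescoping in Class (i): one must verify that the bare $w_a/w_b$ factor produced at position $t_a$ is exactly cancelled by the factor $w_b(w_a+D_{t_b})/[w_a(w_b+D_{t_b})]$ produced at position $t_b$, leaving the clean monotone product above. Class (ii) is a direct stochastic-dominance argument once the Gumbel/exponential-race representation of PL is in place.
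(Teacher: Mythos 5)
Your proof is correct and follows essentially the same route as the paper's: subsets containing both $a$ and $b$ are handled by pairing each permutation with its $a\leftrightarrow b$ swap and showing $M_{\vec{v}}\geq M_{\vec{v}'}$ (you do this by a telescoping ratio, the paper by a factor-by-factor comparison of the PL terms), while subsets containing exactly one of them are handled via the exponential-race (Thurstonian) representation of the PL model (you use an explicit monotone coupling $E_a=U/w_a$, $E_b=U/w_b$; the paper argues monotonicity of $\mathbb{E}X_{a,\mathcal{A}^-}$ in $w_a$ by a change of variables). Both of your key computations check out, including the product range $i=t_a+1,\ldots,t_b$ in the telescoped ratio.
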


The difference between the $k$-th highest associated score and the $(k+1)$-th highest associated score can be written as
\begin{align}
\Delta_k=\tau_{(k)}-\tau_{(k+1)}.
\end{align}
This quantity is important because if it is large, we would expect the top-$k$ items are easy to find. The same observation has been made in non-parametric models in prior works for the pairwise setting \cite{shah2017simple}. In parametric settings, a  similar definition of the $k$-th threshold was also adopted in \cite{chen2015spectral} \cite{jang2017optimal}.

\section{Top-$k$ Selection}
\label{section: top-k selection}
\subsection{Main Result}
Our main result is a pair of almost matching conditions regarding the accuracy of the Borda counting-based method. Before presenting the main result, we first consider the following set of partial ranking probabilities distributions:
\begin{align}
\mathscr{F}_k(\alpha)=\left\{M\in\mathcal{M}: \Delta_k\geq\alpha\sqrt{\frac{\log n}{rp\rho_{n,m}}}\right\}.\label{eqn:Falpha}
\end{align}
The probability distributions in the set $\mathscr{F}_k(\alpha)$ has a relatively large $\Delta_k$, which intuitively suggests that the top-$k$ items may be easy to identify. This intuition is made formal in the following theorems. 

\begin{theorem}
\label{theorem:forward2}
For any $\alpha>0$, the probability of choosing incorrect top-$k$ items using the Borda counting-based method for any items with $M\in\mathscr{F}_k(\alpha)$ is upper bounded as
\begin{align}
&\sup\limits_{M\in\mathscr{F}_k(\alpha)} \mathbb{P}_M[\tilde{\mathcal{S}}_k\neq \mathcal{S}_k^*]\nonumber\\
&\leq
\begin{cases}
  k(n-k)n^{-\frac{\alpha^2}{(4-2p)}}& 0<p\leq p_0,\\
  k(n-k)n^{-\frac{\alpha^2p}{(2(1-\beta_m)^2\frac{m-1}{n-1}+\frac{n-m}{n-1})}}& p_0<p\leq1,
\end{cases}
\end{align}
where $p_0\triangleq\min\left(\frac{1}{2},1-\sqrt{1-(1-\beta_m)^2\frac{m-1}{n-1}-\frac{1}{2}\frac{n-m}{n-1}}\right)$. 
\end{theorem}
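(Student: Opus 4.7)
The plan is to establish the upper bound via a union bound combined with pairwise concentration. If $\tilde{\mathcal{S}}_k \neq \mathcal{S}_k^*$, then at least one pair $(a,b)$ with $a \in \mathcal{S}_k^*$ and $b \notin \mathcal{S}_k^*$ must satisfy $W_a \leq W_b$, so
$$\mathbb{P}_M[\tilde{\mathcal{S}}_k \neq \mathcal{S}_k^*] \leq \sum_{a \in \mathcal{S}_k^*,\, b \notin \mathcal{S}_k^*} \mathbb{P}_M(W_b - W_a \geq 0),$$
a sum of at most $k(n-k)$ terms. For each such pair, $\mathbb{E}[W_a - W_b] = rp\rho_{n,m}(\tau_a - \tau_b) \geq rp\rho_{n,m}\Delta_k$, so the event inside the probability is equivalent to the centered sum $(W_b - W_a) - \mathbb{E}[W_b - W_a]$ exceeding $rp\rho_{n,m}\Delta_k$. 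Everything after this reduction is concentration.

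The key technical step is a decomposition of $W_b - W_a$ into three groups of independent random variables. Partition the $m$-subsets $\mathcal{A}$ into the classes $\mathcal{C}_a$ (contains $a$ but not $b$), $\mathcal{C}_b$ (contains $b$ but not $a$), and $\mathcal{C}_{ab}$ (contains both). Then $W_b - W_a$ is a sum of $r\binom{n-2}{m-1}$ single-item terms $-X_{a,\cdot}^{(\ell)}\in[-1,0]$ from $\mathcal{C}_a$, $r\binom{n-2}{m-1}$ single-item terms $X_{b,\cdot}^{(\ell)}\in[0,1]$ from $\mathcal{C}_b$, and $r\binom{n-2}{m-2}$ paired difference terms $X_{b,\mathcal{A}\setminus\{b\}}^{(\ell)} - X_{a,\mathcal{A}\setminus\{a\}}^{(\ell)}$ from $\mathcal{C}_{ab}$. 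For a paired term the two component scores come from the \emph{same} ranking observation, so the difference lies in $[-(1-\beta_m),\,1-\beta_m]$ whenever observed, by virtue of $\beta_1 = 1$ and $\beta_t \geq \beta_m$. Different rounds and different subsets generate independent observations, so the three classes are mutually independent.

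In the high-$p$ regime I would apply Hoeffding's inequality directly. The sum of squared ranges is $r\bigl[2\binom{n-2}{m-1} + 4(1-\beta_m)^2\binom{n-2}{m-2}\bigr] = 2r\rho_{n,m}\bigl[\tfrac{n-m}{n-1} + 2(1-\beta_m)^2 \tfrac{m-1}{n-1}\bigr]$, using the identities $\binom{n-2}{m-1}/\rho_{n,m} = (n-m)/(n-1)$ and $\binom{n-2}{m-2}/\rho_{n,m} = (m-1)/(n-1)$. Substituting $(rp\rho_{n,m}\Delta_k)^2 \geq \alpha^2 rp\rho_{n,m}\log n$ into the Hoeffding exponent yields exactly the second-case exponent $\alpha^2 p/[\tfrac{n-m}{n-1} + 2(1-\beta_m)^2\tfrac{m-1}{n-1}]$. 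In the low-$p$ regime I would instead use Bernstein's inequality, taking advantage of the sharper per-term variance bound: any variable bounded by $c$ that vanishes with probability $1-p$ has variance at most $p c^2$, so the total variance scales like $rp\rho_{n,m}$ rather than $r\rho_{n,m}$. Carrying out the Bernstein bookkeeping, and using the fact (which is implicit in the definition of $\mathscr{F}_k(\alpha)$ since $\Delta_k\le1$) that the sub-exponential correction term $Rt/3$ is controlled relative to the variance term, yields the first-case exponent $\alpha^2/(4-2p)$. The crossover value $p_0$ is defined so that the two exponents coincide; the $\min(1/2,\cdot)$ form reflects clipping at the natural Bernoulli-variance regime, beyond which the improvement from Bernstein saturates. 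Summing over the at most $k(n-k)$ pairs completes the argument.

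The main obstacle is the tight bookkeeping of the constants, particularly isolating the $(1-\beta_m)^2$ factor in the shared-subset piece. A lazy treatment of $X_b - X_a$ as a generic $[-1,1]$-bounded variable would produce a strictly weaker denominator and fail to close the gap to Shah et al.'s pairwise result. A secondary subtlety is that the variance-based Bernstein bound beats the range-based Hoeffding bound only for small $p$, so the two cases of the theorem are complementary rather than interchangeable, and $p_0$ must be chosen exactly where the two exponents meet.
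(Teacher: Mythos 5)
Your overall architecture is the same as the paper's: a union bound over the at most $k(n-k)$ pairs $(a,b)$, the three-way decomposition of $W_b-W_a$ into single-item terms from subsets containing only $a$ or only $b$ and cross terms from shared subsets with range $[-(1-\beta_m),\,1-\beta_m]$, Hoeffding for the high-$p$ case, Bernstein for the low-$p$ case, and $p_0$ as the crossover of the two exponents. The Hoeffding half is correct and reproduces the paper's computation of the sum of squared ranges and the resulting exponent exactly.

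The gap is in the Bernstein half. Your per-term variance bound $\mathrm{Var}(Y)\le pc^2$ for a variable bounded by $c$ that vanishes with probability $1-p$ yields a total variance of order $2rp\rho_{n,m}D$ with $D=\tfrac{n-m}{n-1}+2(1-\beta_m)^2\tfrac{m-1}{n-1}$, and after adding the Bernstein correction $\tfrac{1}{3}Mt$ with $t=rp\rho_{n,m}\Delta_k$ and $\Delta_k\le 1$, the exponent comes out as roughly $\alpha^2/(4D+4/3)$, not $\alpha^2/(4-2p)$; already for $m=2$ this is about $3\alpha^2/16$, strictly weaker than Shah et al.'s $\alpha^2/4$, so it does not prove the stated bound. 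The paper's variance lemma does two things your sketch omits: (i) it retains the $-(\mathbb{E}X)^2$ terms of each variance (using $\beta_t^2\le\beta_t$ and $\sum_t\beta_tR_{\cdot,\mathcal{A}^-}(t)\le 1$), which is precisely what turns the leading $2p$ into $2p-p^2$ and hence produces the $(2-p)$ in the denominator --- and this is where the hypothesis $p\le\tfrac{1}{2}$ is actually consumed, to justify $S-S^2\le p-p^2$, rather than being a ``saturation'' heuristic; and (ii) it uses $\tau_a-\tau_b\ge\Delta_k$ to extract an explicit $-p\rho_{n,m}\Delta_k$ slack inside the variance bound, which is what absorbs the $\tfrac{1}{3}Mt$ term; the observation that $\Delta_k\le 1$ only shows that term is comparable to, not dominated by, the variance term. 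Without these two refinements the low-$p$ case of the theorem does not close.
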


An outline of the proof for Theorem~\ref{theorem:forward2} is provided in Section \ref{sec:proofoutline}, and we give the complete proof in Section \ref{sec:prooflemmas}. From Theorem~\ref{theorem:forward2}, we see that the bound behaves differently in the low observation probability regime ($0<p\leq p_0$) and the high observation probability regime ($p_0<p\leq1$). This is because the bounds in the two regimes are in fact derived using different concentration inequalities, which are shown to be more effective in the respective regimes.

Our next result establishes the fundamental limit of the score-based method. 
\begin{theorem}
\label{theorem:converse}
Let $n$ and $k$ be chosen with $2k\leq n$. If $\alpha\leq\bar{\alpha}(g,m,\vec{\beta})\triangleq\frac{\sqrt{2}}{7}g(n,m,\vec{\beta})\sqrt{\frac{1}{h(n,m)\rho_{n,m}}}$, $p\geq\frac{\log n}{4rh(n,m)}$, and $n\geq 7$, then the error probability of any estimator $\hat{\mathcal{S}}_k$ is lower bounded as 
\begin{align}
\sup\limits_{M\in\mathscr{F}_k(\alpha)} \mathbb{P}_M[\hat{\mathcal{S}}_k\neq \mathcal{S}_k^*]\geq \frac{1}{7},
\end{align}
where 
\begin{align*}
&g(n,m,\vec{\beta})=\frac{n}{m!}\sum_{t=1}^{q}(\beta_{t}-\beta_{m-q+t})A^{q-1}_{k-1}A_{n-k-1}^{m-q-1}\\
&h(n,m)= \frac{1}{{m \choose q}}\left[{k-1 \choose q-1}{n-k \choose m-q}+{k \choose q}{n-k-1 \choose m-q-1}\right],
\end{align*}
and $q\triangleq \max(1,m-n+k)$ and $A_{n}^k=n!/(n-k)!$.
\end{theorem}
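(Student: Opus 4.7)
The plan is to prove this information-theoretic converse via a Fano-type minimax lower bound on a carefully designed family of hard hypotheses inside $\mathscr{F}_k(\alpha)$. I would construct a finite collection $\{M^{(i,j)}\}_{i\in[k],\,j\in[n]\setminus[k]}$ of $k(n-k)$ probability laws whose true top-$k$ sets are the pairwise-distinct sets $\mathcal{S}^{(i,j)}_k=([k]\setminus\{i\})\cup\{j\}$. Each $M^{(i,j)}$ is a small perturbation of a symmetric base distribution $M^{(0)}$ in which, for every $m$-subset $\mathcal{A}$, items in $[k]\cap\mathcal{A}$ are uniformly assigned to the top positions and the remaining items to the bottom positions. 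Perturbing $M^{(0)}$ by an amount $\epsilon$ in the direction that swaps the favored status of $i$ and $j$ produces $M^{(i,j)}$. The aggregate score gap induced by this perturbation is obtained by summing $\beta_t-\beta_{m-q+t}$ over $t=1,\ldots,q$ and across all $m$-subsets containing the requisite numbers of top and bottom items, which is exactly the combinatorial expression $g(n,m,\vec{\beta})$; this is what forces the required scaling between $\epsilon$ and $\Delta_k$.

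Next, I would bound the Kullback--Leibler divergence between the joint observation distributions under two hypotheses. Since each subset is sampled independently in $r$ rounds and each observation is retained with probability $p$, the total KL decomposes as $r\sum_{\mathcal{A}} p\cdot\mathrm{KL}\!\left(M^{(i,j)}_{\mathcal{A}}\,\|\,M^{(i',j')}_{\mathcal{A}}\right)$. Only $m$-subsets containing at least one of $\{i,j,i',j'\}$ contribute, and the corresponding count is exactly $h(n,m)$. For small $\epsilon$, each per-subset KL is controlled by the $\chi^2$-divergence, of order $\epsilon^2$, so the total KL is of order $rp\,h(n,m)\,\epsilon^2$. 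Choosing $\epsilon$ just large enough so that $\Delta_k=\alpha\sqrt{\log n/(rp\rho_{n,m})}$ and invoking the hypothesis $\alpha\leq\frac{\sqrt 2}{7}\,g(n,m,\vec{\beta})/\sqrt{h(n,m)\rho_{n,m}}$ then yields a total KL at most a small constant times $\log n$. The condition $p\geq\log n/(4rh(n,m))$ is the regularity assumption that keeps the quadratic $\chi^2$ approximation tight and prevents the $1-p$ mass from dominating the divergence.

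The final step is to apply Fano's inequality with the $k(n-k)$ hypotheses. The assumption $2k\leq n$ together with $n\geq 7$ gives $\log(k(n-k))\gtrsim\log n$, so with $I(\hat{\mathcal{S}}_k;S)\lesssim\log n$ from the KL bound above, Fano yields a minimax error probability at least $\tfrac{1}{7}$, matching the stated constant once the slack factors are tracked. The main obstacle will be the delicate combinatorial accounting in the constrained regime $q=\max(1,m-n+k)>1$, where every $m$-subset must contain at least $q$ items from the candidate top-$k$: there are fewer ``swappable'' positions, so only $q$ terms appear in $g$, and one must simultaneously verify that each perturbed $M^{(i,j)}$ is a valid probability measure (nonnegative and summing to one on every subset), that the induced gap equals the prescribed $\Delta_k$, and that the per-subset $\chi^2$ bound respects the base probabilities. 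Chasing the exact constants through the Pinsker/Fano steps so that the slack from $\log(k(n-k))\geq\log n$, the factor $\sqrt{2}/7$ in $\bar{\alpha}$, and the KL approximation all align to produce the clean threshold $\tfrac{1}{7}$ is the other bookkeeping-heavy piece.
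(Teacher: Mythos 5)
Your skeleton is the same as the paper's: build a packing of perturbed distributions inside $\mathscr{F}_k(\alpha)$, calibrate the perturbation size $\delta$ against $\Delta_k$ through the combinatorial factor $g(n,m,\vec\beta)$, bound each pairwise KL by (a constant times) $rp\,h(n,m)\,\delta^2/(1-\delta^2)$, and finish with Fano. The paper does this with the $n-k+1$ hypotheses $\mathcal{S}^*[a]=\{1,\dots,k-1\}\cup\{a\}$, $a\in\{k,\dots,n\}$, and the explicit tilt $M^a_{\vec v}=(m!)^{-1}(1\pm\delta)$ on the two permutation classes where the $q$ in-set items occupy exactly the top (resp.\ bottom) $q$ positions, and $(m!)^{-1}$ otherwise. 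However, two concrete points in your construction do not work as written. First, the base distribution. A symmetric $M^{(0)}$ that places the items of $[k]\cap\mathcal{A}$ deterministically above the rest already induces an order-one associated-score gap between $[k]$ and its complement, so no $O(\epsilon)$ perturbation of it can produce a law whose top-$k$ set is $([k]\setminus\{i\})\cup\{j\}$; and if you instead take $M^{(i,j)}$ to be the separating law built around $\mathcal{S}^{(i,j)}_k$ itself, two hypotheses have disjoint supports on some $m$-subsets and the pairwise KL is infinite. The center of the packing must be the \emph{uniform} law $(m!)^{-1}$, with the ``top items on top'' pattern serving only as the perturbation \emph{direction}; that is precisely what the paper's $(1\pm\delta)$ tilt does, and it also makes validity of each $M^{(i,j)}$ immediate since the up- and down-weighted permutation classes have equal cardinality.

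Second, the $k(n-k)$-hypothesis family costs you the stated constant. For two hypotheses $(i,j)\neq(i',j')$ with $i\neq i'$ and $j\neq j'$ the symmetric difference of the top-$k$ sets is the four items $\{i,i',j,j'\}$, so the family of $m$-subsets on which the two laws disagree is roughly twice as large as in the two-item situation that $h(n,m)$ actually accounts for; your claim that the count ``is exactly $h(n,m)$'' is not correct, and the worst-case pairwise KL (which is what enters Fano) roughly doubles. Since $\log\bigl(k(n-k)\bigr)$ is not correspondingly larger than $\log(n-k+1)$ (take $k=1$), the numerator of the Fano fraction grows by a factor of about $2$ while the denominator does not, and chasing the constants yields a threshold on $\alpha$ smaller than the stated $\frac{\sqrt2}{7}g(n,m,\vec\beta)\sqrt{1/(h(n,m)\rho_{n,m})}$ --- i.e., a strictly weaker theorem. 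The paper sidesteps this by freezing $\{1,\dots,k-1\}$ and varying only the $k$-th member $a$, so that any two hypotheses differ in exactly two items and the KL bound $4rp\,h(n,m)\,\delta^2/(1-\delta^2)$, combined with $\delta\le 2\sqrt2/7$ from the assumptions on $\alpha$ and $p$, lands exactly on the $1/7$ bound.
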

Notice that $h(n,m)$ can also be written as $h(n,m)=\frac{nq+k(m-2q)}{m!}A^{q-1}_{k-1}A_{n-k-1}^{m-q-1}$. This means that the parameters $g(n,m,\vec{\beta})$ and $h(n,m)$ are roughly in the same order. 
In particular, when $m-n+k\leq1$, we have $q=1$. The parameters can be simplified further as
\begin{align*}
        &g(n,m,\vec{\beta})=\frac{n(\beta_1-\beta_m)A_{n-k-1}^{m-2}}{m!}\\
       & h(n,m)=\frac{(n+k(m-2))A_{n-k-1}^{m-2}}{m!}.
    \end{align*}

\begin{figure}[t]
\centering
\includegraphics[scale=0.6]{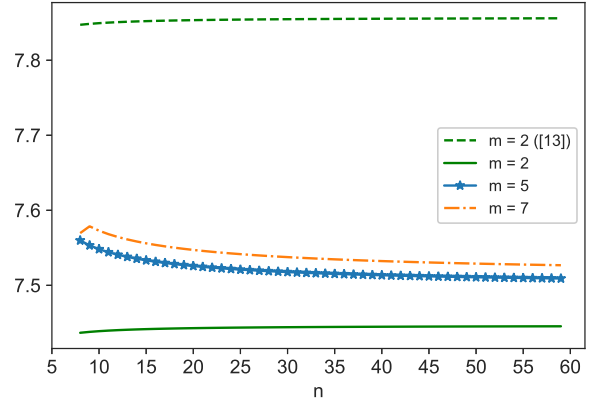}
\caption{The gap between the upper bound and the lower bound on $\alpha$ in Theorem \ref{theorem:forward2} and Theorem \ref{theorem:converse} for different $m$ when  $p=0.2$. The notch on the dashed line is where the value of $q$ changes from $m-n+k$ to $1$. \label{fig:bound_p=0.2}}
\end{figure}
\begin{figure}[h]
\centering
\includegraphics[scale=0.6]{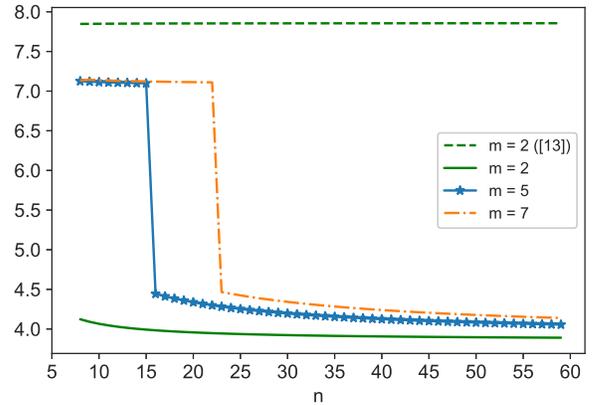}
\caption{The gap between the upper bound and the lower bound on $\alpha$ in Theorem \ref{theorem:forward2} and Theorem \ref{theorem:converse} for different $m$ when $p=0.4$. The drops on the dashed line and the line with asterisks markers are where the value of $p_0$ changes from $p_0\geq p $ to $p_0\leq p$.\label{fig:bound_p=0.4}}
\end{figure}

\begin{figure}[h]
\centering
\includegraphics[scale=0.6]{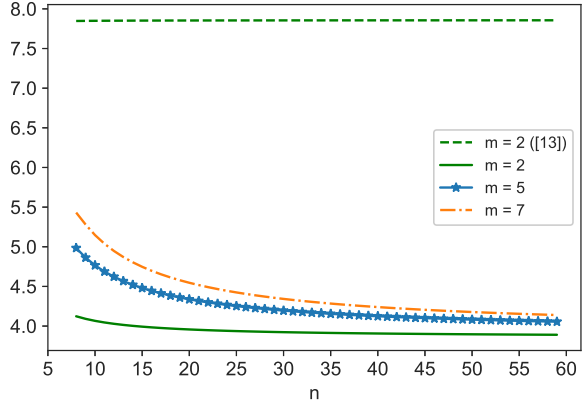}
\caption{ The gap between the upper bound and the lower bound on $\alpha$ in Theorem \ref{theorem:forward2} and Theorem \ref{theorem:converse} for different $m$ when $p=1$. \label{fig:bound_p=1}}
\end{figure}

In Figs. \ref{fig:bound_p=0.2},   \ref{fig:bound_p=0.4}, and  \ref{fig:bound_p=1}, we plot { the gap between the upper bound and the lower bound on $\alpha$ in Theorem \ref{theorem:forward2} and Theorem \ref{theorem:converse}} for different assignments of $p$. The score function $\vec{\beta}$ is chosen to be of uniform spacing and $\beta_m=0$, $k=3$. { The results in \cite{shah2017simple} are also included for reference.} It can be seen that the gap monotonically increases with $m$ in general. The gap between the forward direction and the converse direction is smaller for large observation probability $p$ in general, mainly because in this regime we have $p\geq p_0$, and the bound in Theorem \ref{theorem:forward2} is tighter. {From Figs. \ref{fig:bound_p=0.2},   \ref{fig:bound_p=0.4}, and  \ref{fig:bound_p=1}, it can be seen that the gap on $\alpha$ in our new results is greatly reduced compared with the results in \cite{shah2017simple} for the pairwise case.}

The proof of Theorem \ref{theorem:converse} is given in Section \ref{sec:proofconverse}, which relies on the application of Fano's inequality on a set of probability distributions that are difficult to distinguish for the top-$k$ items. 

\subsection{Analysis of the Error Probability Bounds}

To develop a better understanding of our main result, we return to the special case of $m=2$ and compare our new bound with the one derived in \cite{shah2017simple}. For this case, $(\beta_1,\beta_{2})=(1,0)$, and it is clear that $p\rho_{n,m}=p(n-1)$. We have the following observations. 
\begin{enumerate}
\item Theorem \ref{theorem:forward2} for $m=2$ is stronger than the bound in the forward direction in \cite{shah2017simple}. 
Theorem \ref{theorem:forward2} can be simplified to produce the following set 
\begin{align}
\mathscr{F}_k(\alpha)=\left\{M\in\mathcal{M}: \Delta_k\geq\alpha\sqrt{\frac{\log n}{rp(n-1)}}\right\}.
\end{align}
The result in \cite{shah2017simple} has a slightly different definition of $\Delta_k$ without the normalization, which can be directly translated into our notation, and it induces the following set $\mathscr{F}^{(1)}_k(\alpha)$ 
\begin{align*}
\mathscr{F}^{(1)}_k(\alpha)=\left\{M\in\mathcal{M}: \Delta_k\geq\alpha \sqrt{\frac{\log n}{rp(n-1)}}\sqrt{\frac{n}{n-1}}\right\}.
\end{align*}
The error probability bound given in \cite{shah2017simple} is that when $\alpha\geq 8$, 
\begin{align}
 \sup\limits_{M\in\mathscr{F}^{(1)}_k(\alpha)} \mathbb{P}_M[\tilde{\mathcal{S}}_k\neq \mathcal{S}_k^*]  \leq n^{-\alpha^2/4+2}\leq n^{-14}. \label{eqn:alpha8also}
\end{align}
It is easy to verify that $\mathscr{F}^{(1)}_k(\alpha)\subseteq\mathscr{F}_k(\alpha)$. Thus, in order to show Theorem \ref{theorem:forward2} is stronger than the result in \cite{shah2017simple} for $m=2$, we only need to show our error probability is smaller than the right hand side of (\ref{eqn:alpha8also}). For this purpose, we assume  $\alpha\geq8$. In the regime $0<p\leq p_0$, we have 
\begin{align}
k(n-k)n^{-\frac{\alpha^2}{(4-2p)}}\leq n^{-\alpha^2/4+2},
\end{align}
since $p>0$. In the regime $p_0<p\leq1$, and for the case where $n\geq3$, we have $p_0=1-\sqrt{1-\frac{n}{2(n-1)}}$. Then
\begin{align}
\label{ineq:outerBound,m=2}
&k(n-k)n^{-\frac{\alpha^2p}{(2(1-\beta_2)^2\frac{m-1}{n-1}+\frac{n-m}{n-1})}} \notag\\
& =k(n-k)n^{-\frac{\alpha^2}{2(1+\sqrt{1-\frac{n}{2(n-1)}})}}\notag\\
&\leq n^{-\alpha^2/4+2}.
\end{align}
When $n=2$, $p_0=\frac{1}{2}$, the bound in (\ref{ineq:outerBound,m=2}) also holds. We thus come to the conclusion that in both regimes, the bounds on the error probability in Theorem \ref{theorem:forward2} are stronger than that in \cite{shah2017simple}. 
\item The bound in Theorem \ref{theorem:converse} for $m=2$ is equivalent to the converse direction result given in \cite{shah2017simple}. 
When $m=2$, the assumptions in Theorem \ref{theorem:converse} reduce to $\alpha\leq\frac{1}{7}\sqrt{\frac{n}{n-1}}$,
and we  have $p\geq\frac{\log n}{2rn}$. It implies that the conditions and the bound given in Theorem 2 matches precisely those given in the corresponding converse in \cite{shah2017simple}, after taking into account the slightly different definition of $\Delta_k$ between ours and \cite{shah2017simple}.
\item Theorems \ref{theorem:forward2} and  \ref{theorem:converse} together translate to a reduced gap between the upper bound and the lower bound on $\alpha$ than the previous best result in \cite{shah2017simple}. When $p$ is not too small and $n>7$, the result given in \cite{shah2017simple} can be understood as follows: when $\alpha\leq 1/7$, the error probability of Borda counting-based method suffers from a large probability of error (larger than $1/7$), and when $\alpha\geq 8$, the error probability becomes small (less than $n^{-14}$, i.e., diminishing in $n$ and extremely small when $n$ is moderately large). Because of the refined bounds in Theorem \ref{theorem:forward2}, we can reduce the gap further when keeping the probability of error the same as in \cite{shah2017simple}. In the low observation probability regime, it is clear that when  $\alpha\geq 4\sqrt{4-2p}$ (i.e., a value strictly less than $8$), the error probability is less than $n^{-14}$; in the high probability regime where $p\geq p_0=1-\sqrt{1-\frac{n}{2(n-1)}}\geq 1-\sqrt{1/2}$, Theorem \ref{theorem:forward2} states that $\alpha\geq 4\sqrt{n/p(n-1)}$ suffices.  In both cases, the gaps between the upper bounds and the lower bounds on $\alpha$ are reduced, although they do not yet completely match. 
\item The asymptotic gap when $n\rightarrow \infty$ can be identified as follows. If we allow $\alpha$ to be chosen such that 
\begin{align} 
\sup\limits_{M\in\mathscr{F}_k(\alpha)} \mathbb{P}_M[\tilde{\mathcal{S}}_k\neq \mathcal{S}_k^*]\rightarrow 0 
\end{align} 
in the result of \cite{shah2017simple}, then it is clear that we need $\alpha>2\sqrt{2}$, while the lower bound remains $\alpha\leq 1/7$. With the result in Theorem \ref{theorem:forward2}, we can refine the upper bound to be 
\begin{align}
\alpha> \begin{cases}
2\sqrt{2-p}& 0<p\leq p_0\\
\sqrt{2/p}& p_0<p\leq 1
\end{cases},\label{eqn:asymp}
\end{align}
and thus in both regimes the asymptotic gap on $\alpha$ for the phase change is reduced. 
\end{enumerate} 

For general $m$, we can similarly seek the lower bounds on $\alpha$, such that the error probability is less than $n^{-\gamma}$.
\begin{enumerate}
    \item For fixed $\gamma$ (e.g., $\gamma=14$ as in the case $m=2$), it can be verified that 
    \begin{align}
        \alpha> \begin{cases}
\sqrt{(2+\gamma)(4-2p)}& 0<p\leq p_0\\
\sqrt{\frac{(2+\gamma)}{p}(2(1-\beta_m)^2\frac{m-1}{n-1}+\frac{n-m}{n-1})} & p_0<p\leq 1
\end{cases},
    \end{align}
    \item For the asymptotic case we can choose $\gamma\rightarrow 0$, then the same condition as in (\ref{eqn:asymp}) drives the error probability to zero when $n\rightarrow \infty$.  
\end{enumerate}

\label{specialcase:top-k}

\subsection{Proof Outline of Theorem \ref{theorem:forward2} }
\label{sec:proofoutline}

Theorem \ref{theorem:forward2} is essentially the combination of the two bounds in the two corresponding regimes, which are given as two propositions below.
\begin{prop}
\label{lemma:forward2}
For any $\alpha>0$ and $p\in(0,1]$, the probability of choosing incorrect top-$k$ items using the Borda counting-based method for any items with $M\in\mathscr{F}_k(\alpha)$ is upper bounded as
\begin{equation}
\label{eqn:forward2}
   \sup\limits_{M\in\mathscr{F}_k(\alpha)} \mathbb{P}_M[\tilde{\mathcal{S}}_k\neq \mathcal{S}_k^*]\leq k(n-k)n^{-\frac{\alpha^2p}{(2(1-\beta_m)^2\frac{m-1}{n-1}+\frac{n-m}{n-1})}}.
\end{equation}
\end{prop}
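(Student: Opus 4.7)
The plan is to bound the error event by a union bound over the at most $k(n-k)$ pairs $(a,b)$ with $a \in \mathcal{S}_k^*$ and $b \notin \mathcal{S}_k^*$, and for each pair establish $\mathbb{P}[W_a \leq W_b] \leq n^{-\alpha^2 p / c}$ via Hoeffding's inequality, where $c = 2(1-\beta_m)^2\tfrac{m-1}{n-1} + \tfrac{n-m}{n-1}$ is the denominator appearing in the statement. The event $\{\tilde{\mathcal{S}}_k \neq \mathcal{S}_k^*\}$ implies at least one such pair has $W_a \leq W_b$, and since $\mathcal{S}_k^*$ is determined by the associated scores, one has $\tau_a - \tau_b \geq \Delta_k$ for every such pair, hence $\mathbb{E}[W_a - W_b] \geq rp\rho_{n,m}\Delta_k$. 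This reduces the task to a left-tail concentration estimate.

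The structural step I would carry out next is to write $W_a - W_b = \sum_{\ell \in [r]} \sum_{\mathcal{A}} Z_{\mathcal{A}}^{(\ell)}$ as a sum of mutually independent bounded summands, partitioned according to how the size-$m$ subset $\mathcal{A}$ intersects $\{a, b\}$. For $\{a, b\} \subseteq \mathcal{A}$, the summand $Z_\mathcal{A}^{(\ell)} = X_{a,\mathcal{A}\setminus\{a\}}^{(\ell)} - X_{b,\mathcal{A}\setminus\{b\}}^{(\ell)}$ lies in $[-(1-\beta_m),\,1-\beta_m]$; for subsets containing exactly one of $a$ or $b$ the summand lies in $[0,1]$ or $[-1,0]$; and subsets containing neither contribute zero. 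There are $\binom{n-2}{m-2}$ subsets of the first type and $\binom{n-2}{m-1}$ of each of the other two per round, so the Hoeffding sum of squared ranges evaluates to
\[
r\Bigl[4(1-\beta_m)^2\binom{n-2}{m-2} + 2\binom{n-2}{m-1}\Bigr] = 2r\rho_{n,m}\,c,
\]
after using $\binom{n-2}{m-2}/\rho_{n,m} = \tfrac{m-1}{n-1}$ and $\binom{n-2}{m-1}/\rho_{n,m} = \tfrac{n-m}{n-1}$.

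Applying Hoeffding's inequality to the centered sum with threshold $rp\rho_{n,m}\Delta_k$ and substituting the lower bound $\Delta_k^2 \geq \alpha^2 \log n /(rp\rho_{n,m})$ cancels the $r$ and $\rho_{n,m}$ factors and delivers an exponent of $\alpha^2 p \log n / c$, yielding $\mathbb{P}[W_a \leq W_b] \leq n^{-\alpha^2 p / c}$; the union bound then supplies the prefactor $k(n-k)$. The main conceptual hurdle is this decomposition: when a subset contains both $a$ and $b$, their observed scores come from the same random ranking and must be packaged as a single summand $X_a - X_b$ of tight range $2(1-\beta_m)$ (reflecting that both observed scores lie in $\{0\}\cup[\beta_m, 1]$), rather than split into two separate range-$1$ parts. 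This careful bookkeeping of dependence is precisely what introduces the $(1-\beta_m)^2$ coefficient and makes the bound sensitive to the shape of the scoring vector $\vec{\beta}$; everything else is a routine Hoeffding calculation.
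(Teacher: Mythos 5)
Your proposal is correct and follows essentially the same route as the paper's proof: a union bound over the $k(n-k)$ pairs, the same partition of the summands according to whether the $m$-subset contains one or both of $a,b$ (the paper writes these via $\mathcal{A}^-$ and the ``centralized cross-score'' over $\mathcal{A}^{--}$), the same key point that the two scores from a subset containing both items must be packaged as a single summand of range $2(1-\beta_m)$, and the identical Hoeffding computation yielding the exponent $\alpha^2 p/\bigl(2(1-\beta_m)^2\tfrac{m-1}{n-1}+\tfrac{n-m}{n-1}\bigr)$.
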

\begin{prop}
\label{lemma:forward1}
For any $\alpha>0$ and $p\leq\frac{1}{2}$, the probability of choosing incorrect top-$k$ items using the Borda counting-based method for any items with $M\in\mathscr{F}_k(\alpha)$ is upper bounded as
\begin{equation}
\label{eqn:forward1}
   \sup\limits_{M\in\mathscr{F}_k(\alpha)} \mathbb{P}_M[\tilde{\mathcal{S}}_k\neq \mathcal{S}_k^*]\leq k(n-k)n^{-\frac{\alpha^2}{(4-2p)}}.
\end{equation}
\end{prop}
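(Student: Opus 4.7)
The plan is to mirror the strategy of Proposition~\ref{lemma:forward2}---union bound over ordered pairs $(a,b)$ with $a\in\mathcal{S}_k^*$ and $b\notin\mathcal{S}_k^*$, reducing the analysis to bounding $\mathbb{P}[W_a-W_b\le 0]$ for each pair---but to invoke Bernstein's inequality in place of Hoeffding's. In the low-$p$ regime, each per-round, per-subset score vanishes with probability $1-p$, so its variance is of order $p$ while its range remains of order $1$. Bernstein exploits this variance scaling, whereas Hoeffding sees only the range, making Bernstein the tighter tool when $p$ is small.

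First, the union bound gives
\begin{align*}
\mathbb{P}[\tilde{\mathcal{S}}_k\neq\mathcal{S}_k^*]\le\sum_{a\in\mathcal{S}_k^*}\sum_{b\notin\mathcal{S}_k^*}\mathbb{P}[W_a-W_b\le 0].
\end{align*}
For each such pair $(a,b)$, decompose $W_a-W_b$ into a sum of mutually independent contributions indexed by $(\ell,\mathcal{A})$: (i)~$r\binom{n-2}{m-2}$ Type-1 terms $X_{a,\{b\}\cup\mathcal{A}^{--}}^{(\ell)}-X_{b,\{a\}\cup\mathcal{A}^{--}}^{(\ell)}$ from subsets containing both $a$ and $b$; (ii)~$r\binom{n-2}{m-1}$ Type-2 terms $X_{a,\mathcal{A}^-}^{(\ell)}$ from subsets containing $a$ only; and (iii)~$r\binom{n-2}{m-1}$ Type-3 terms $-X_{b,\mathcal{A}^-}^{(\ell)}$ from subsets containing $b$ only. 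The expectation is $\mathbb{E}[W_a-W_b]=rp\rho_{n,m}(\tau_a-\tau_b)\ge rp\rho_{n,m}\Delta_k\ge\alpha\sqrt{rp\rho_{n,m}\log n}$ by the hypothesis $M\in\mathscr{F}_k(\alpha)$.

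Next, apply Bernstein's inequality to the centered sum with $t=\mathbb{E}[W_a-W_b]$, giving
\begin{align*}
\mathbb{P}[W_a-W_b\le 0]\le\exp\!\left(-\tfrac{t^{2}}{2V+(2M/3)t}\right),
\end{align*}
where $V$ is the total variance and $M$ bounds each centered term. Writing each score as $X=I\cdot B$ with $I\sim\mathrm{Bernoulli}(p)$ and $B\in[0,1]$, the identity $\mathrm{Var}(X)=pE[B^{2}]-p^{2}E[B]^{2}$ together with $E[B^{2}]\le E[B]$ (valid since $B\in[0,1]$), the normalization $\sum_t R_{a,\mathcal{A}^-}(t)=1$, and a Cauchy--Schwarz argument that preserves the $-p^{2}E[B]^{2}$ subtraction yields a variance bound of the form $V\le rp\rho_{n,m}(2-p)$. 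Under $p\le 1/2$, the linear subexponential term $(2M/3)t$ in Bernstein's bound is absorbed into this leading denominator without degrading the effective constant $2-p$, so the exponent reaches $\alpha^{2}\log n/(4-2p)$. Summing over the $k(n-k)$ pairs completes the proof.

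The main obstacle is extracting the refined factor $2-p$ from the variance rather than the naive constant $2$: a loose bound $V\le 2rp\rho_{n,m}$ only yields $\alpha^{2}/4$ in the exponent, which is strictly weaker than $\alpha^{2}/(4-2p)$ for $p>0$. Preserving the $-p^{2}E[B]^{2}$ contribution across all three types---together with using $p\le 1/2$ to control the Bernstein subexponential correction---is the delicate step that delivers the sharper constant claimed in the proposition.
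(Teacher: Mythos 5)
Your overall route is the same as the paper's: union bound over pairs $(a,b)$, the three-way decomposition into subsets containing both, only $a$, or only $b$, and Bernstein's inequality with a variance bound of order $p$. The gap is in the one step you wave through: the claim that the linear Bernstein term $\tfrac{M}{3}t$ is ``absorbed into the leading denominator without degrading the effective constant $2-p$.'' With your stated variance bound $V\leq rp\rho_{n,m}(2-p)$ this is false. Taking $t=rp\rho_{n,m}\Delta_k$ and $M=2(1-\beta_m)\leq 2$ (the cross-score terms have range up to $2$), the denominator becomes $rp\rho_{n,m}\bigl[(2-p)+\tfrac{2}{3}\Delta_k\bigr]$, and since $\Delta_k$ need not be small (it can be of order $1$ in the non-asymptotic statement), this yields only the exponent $\alpha^2/\bigl(4-2p+\tfrac{4}{3}\Delta_k\bigr)$, strictly weaker than the claimed $\alpha^2/(4-2p)$. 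The condition $p\leq\tfrac12$ does not rescue this; in the paper it is used inside the variance computation (to guarantee $S-S^2\leq p-p^2$ for the cross-score terms, where $S\in[-p,p]$), not to control the subexponential correction.

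The missing idea is the paper's Lemma~\ref{lemma:variancebound}: the variance sum is bounded not merely by $(2p-p^2)\rho_{n,m}$ but by
\begin{align}
\Sigma_{\bar{X}_{a,b}}\leq (2p-p^2)\rho_{n,m}-p\rho_{n,m}\Delta_k,
\end{align}
where the extra negative term is extracted by writing $\tau_a+\tau_b\leq 2\tau_a-\Delta_k$ (using $\tau_a-\tau_b\geq\Delta_k$) before applying the pointwise bounds. That slack of $-rp\rho_{n,m}\Delta_k$ in the denominator strictly dominates the Bernstein correction $\tfrac{1}{3}Mt\leq\tfrac{2}{3}rp\rho_{n,m}\Delta_k$, so the net denominator is at most $rp(2-p)\rho_{n,m}$ and the exponent $\alpha^2/(4-2p)$ follows. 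Your ``Cauchy--Schwarz argument that preserves the $-p^2E[B]^2$ subtraction'' does not produce this $\Delta_k$-dependent term, so as written the proof does not reach the stated constant.
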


Detailed proofs of these propositions can be found in Section \ref{sec:prooflemmas}. We are now ready to prove Theorem \ref{theorem:forward2}. 
\begin{proof}[Proof of Theorem \ref{theorem:forward2}] 
Notice that the statement that the bound in Proposition \ref{lemma:forward1} is tighter than that in
Proposition \ref{lemma:forward2} is equivalent to
\begin{align}
\frac{p}{(2(1-\beta_m)^2\frac{m-1}{n-1}+\frac{n-m}{n-1})}\geq \frac{1}{(4-2p)},
\end{align}
which can be simplified to 
\begin{align}
   p^2-2p+\frac{(2(1-\beta_m)^2\frac{m-1}{n-1}+\frac{n-m}{n-1})}{2}>0.
\label{eqn: compare two lemmas}
\end{align}
Since $p\in[0,1]$, we have $p\leq1-\sqrt{1-(1-\beta_m)^2\frac{m-1}{n-1}-\frac{1}{2}\frac{n-m}{n-1}}$. Combining the condition $p\leq \frac{1}{2}$ in Lemma \ref{lemma:forward1}, we obtain the result that when 

\begin{align}
p\leq p_0=\min\left\{\frac{1}{2},1-\sqrt{1-(1-\beta_m)^2\frac{m-1}{n-1}-\frac{1}{2}\frac{n-m}{n-1}}\right\}, 
\end{align}
the bound in Lemma \ref{lemma:forward1} holds and it is tighter than that in Lemma \ref{lemma:forward2}. On the other hand, when $p\geq p_0$, the bound in Lemma \ref{lemma:forward2} holds and is tighter. This completes the proof.
\end{proof}

\section{Approximate Top-$k$ Selection}
\label{section: hamming error}
In the previous section, we consider the exact top-$k$ selection problem. In many cases, it may be sufficient to identify the top-$k$ set approximately, which we study in this section. This approximate setting has also been considered previously in \cite{shah2017simple} for the special case of $m=2$.

\subsection{Main Results}

First, we define
\begin{align}
\mathscr{F}_{k,h}(\alpha)=\left\{M\in\mathcal{M}: \Delta_{k,h}\geq\alpha\sqrt{\frac{\log n}{rp\rho_{n,m}}}\right\},\label{eqn:Falphakh}
\end{align}
where $\Delta_{k,h} :=\tau_{(k-h)}-\tau_{(k+h)}.$ Comparing $\mathscr{F}_{k,h}(\alpha)$ with $\mathscr{F}_{k}(\alpha)$ in (\ref{eqn:Falpha}), it can be seen that the only difference is that $\Delta_{k}$ in (\ref{eqn:Falpha}) is replaced by $\Delta_{k,h}$, i.e., it was for the case $\Delta_{k,0}=\Delta_{k}$. We denote the Hamming distance between two subsets $A, B$ of $[n]$ as
 \begin{align}
 D_H(A, B)= \left \vert \left\{A\cup B\right\}\setminus\left\{ A\cap B\right\} \right \vert. 
 \end{align}
\begin{theorem}
\label{theorem:hamming_forward}
For any $\alpha>0$, we have
\begin{align}
&\sup\limits_{M\in\mathscr{F}_{k,h}(\alpha)} \mathbb{P}_M[D_H(\tilde{\mathcal{S}}_k,
       \mathcal{S}_k^*)>2h]\notag\\
&\leq
\begin{cases}
  (k-h)(n-k-h)n^{-\frac{\alpha^2}{(4-2p)}}& 0<p\leq p_0\\
  (k-h)(n-k-h)n^{-\frac{\alpha^2p}{(2(1-\beta_m)^2\frac{m-1}{n-1}+\frac{n-m}{n-1})}}& p_0<p\leq1
\end{cases},
\end{align}
where $p_0$ has the same definition as in Theorem \ref{theorem:forward2}, $\tilde{\mathcal{S}}_k$ is the top-$k$ estimate generated by the Borda-counting algorithm.
\end{theorem}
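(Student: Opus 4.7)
The plan is to reduce the approximate top-$k$ error event to a union of pairwise comparison events, so that the concentration machinery used to prove Theorem~\ref{theorem:forward2} (via Propositions~\ref{lemma:forward2} and \ref{lemma:forward1}) can be reused almost verbatim.

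First I would exploit the structure of the Hamming ball. Since $|\tilde{\mathcal{S}}_k|=|\mathcal{S}_k^*|=k$, one has $D_H(\tilde{\mathcal{S}}_k,\mathcal{S}_k^*)=2|\mathcal{S}_k^*\setminus\tilde{\mathcal{S}}_k|$, so the error event $\{D_H>2h\}$ coincides with $\{|\mathcal{S}_k^*\setminus\tilde{\mathcal{S}}_k|\geq h+1\}$. Let $a_1,\ldots,a_{k-h}$ denote the items with the $k-h$ largest associated scores and $b_1,\ldots,b_{n-k-h}$ denote those with the $n-k-h$ smallest. A pigeonhole argument then gives: if at least $h+1$ items of $\mathcal{S}_k^*$ are missing from $\tilde{\mathcal{S}}_k$, then because only $h$ of them can come from ranks $k-h+1,\ldots,k$, at least one of the $a_i$ must lie outside $\tilde{\mathcal{S}}_k$; symmetrically, since $|\tilde{\mathcal{S}}_k\setminus\mathcal{S}_k^*|\geq h+1$ and only $h$ of them can have rank in $\{k+1,\ldots,k+h\}$, at least one of the $b_j$ must lie inside $\tilde{\mathcal{S}}_k$. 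Because the Borda algorithm keeps the $k$ items with the highest empirical scores, this forces $W_{a_i}\leq W_{b_j}$ for at least one such pair $(i,j)$.

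Next I would apply the union bound over the $(k-h)(n-k-h)$ pairs to obtain
\begin{align*}
\mathbb{P}_M\bigl[D_H(\tilde{\mathcal{S}}_k,\mathcal{S}_k^*)>2h\bigr]\leq \sum_{i=1}^{k-h}\sum_{j=1}^{n-k-h}\mathbb{P}_M[W_{a_i}\leq W_{b_j}],
\end{align*}
and bound each tail term using the same two-regime concentration argument used in Propositions~\ref{lemma:forward2} and \ref{lemma:forward1}. The only change from the exact top-$k$ setting is in the deterministic mean gap: by construction $\tau_{a_i}\geq\tau_{(k-h)}$ and $\tau_{b_j}\leq\tau_{(k+h)}$, so
\begin{align*}
\mathbb{E}[W_{a_i}-W_{b_j}]=p\rho_{n,m}(\tau_{a_i}-\tau_{b_j})\geq p\rho_{n,m}\Delta_{k,h},
\end{align*}
and the assumption $M\in\mathscr{F}_{k,h}(\alpha)$ supplies exactly the $\alpha\sqrt{\log n/(rp\rho_{n,m})}$ lower bound that drove the Hoeffding and Bernstein exponents in Theorem~\ref{theorem:forward2}. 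The variance and boundedness estimates for $W_{a_i}-W_{b_j}$ depend only on $\vec{\beta}$, $m$, $n$ and $p$ (not on the ranks of $a_i$ or $b_j$), so they carry through unchanged and yield the advertised exponents $\alpha^2/(4-2p)$ for $p\leq p_0$ and $\alpha^2 p/(2(1-\beta_m)^2(m-1)/(n-1)+(n-m)/(n-1))$ for $p>p_0$.

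The main obstacle I anticipate is the first step: the pigeonhole reduction must be set up so that $a_i\neq b_j$ (automatic here since their ranks differ by at least $2h+1$) and so that the misplaced top item and the intruding bottom item can both be captured in a single deterministic inclusion $\{D_H>2h\}\subseteq \bigcup_{i,j}\{W_{a_i}\leq W_{b_j}\}$. Once that inclusion is in place, the probabilistic work collapses onto the two propositions already proved, so neither Hoeffding's nor Bernstein's inequality has to be revisited, and the combination step that produced the $p_0$ threshold in Theorem~\ref{theorem:forward2} applies verbatim.
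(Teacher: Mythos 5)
Your proposal is correct and follows essentially the same route as the paper: both reduce the event $\{D_H(\tilde{\mathcal{S}}_k,\mathcal{S}_k^*)>2h\}$ to pairwise score reversals between the top $k-h$ and bottom $n-k-h$ items, take a union bound over the $(k-h)(n-k-h)$ pairs, and reuse the two-regime Hoeffding/Bernstein bounds of Propositions~\ref{lemma:forward2} and \ref{lemma:forward1} with $\Delta_{k,h}$ in place of $\Delta_k$. Your contrapositive pigeonhole step is just the flip side of the paper's observation that, on the complement event, either $\tilde{\mathcal{S}}_k\subseteq[k+h]$ or $[k-h]\subseteq\tilde{\mathcal{S}}_k$, so no new idea is missing.
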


\begin{theorem}
\label{theorem:hamming_converse}
Let $\nu_1$, $\nu_2 \in (0, 1)$ be two constants such that $2h\leq\frac{1}{1+\nu_2}\min\left\{n-k, k, n^{1-\nu_1}\right\}$. In the regime $p\geq\frac{\log
n}{4h(n,m)r}$, for any $\alpha\leq \frac{\sqrt{2}}{14}g(n,m,\vec{\beta})\sqrt{\frac{\nu_1\nu_2}{h(n,m)\rho_{n,m}}}$, any estimator $\hat{\mathcal{S}}_k$ has an error probability lower bounded by
   \begin{equation}
       \sup\limits_{M\in\mathscr{F}_{k,h}(\alpha)} \mathbb{P}_M[D_H(\hat{\mathcal{S}}_k,
       \mathcal{S}_k^*)>2h]\geq\frac{1}{7},
   \end{equation}
for any $n$ that is larger than a $(\nu_1, \nu_2)$-dependent constant  $c(\nu_1,\nu_2)$, and $q\triangleq \max(1,m-n+k).$ 
\end{theorem}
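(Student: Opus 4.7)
The plan is to extend the Fano-inequality argument underlying Theorem \ref{theorem:converse} from a two-point test to a multi-hypothesis test in which the hypotheses are pairwise separated under the Hamming-tolerant loss. Concretely, I would construct a family $\{M_\mathcal{T}\}_{\mathcal{T}\in\mathfrak{T}}\subseteq\mathscr{F}_{k,h}(\alpha)$ of probability distributions whose ``true'' top-$k$ sets $\mathcal{T}$ form a packing with pairwise Hamming distance strictly greater than $4h$. Since the open Hamming balls of radius $2h$ centered at these sets are then mutually disjoint, any estimator $\hat{\mathcal{S}}_k$ can lie within the tolerance of at most one $\mathcal{T}\in\mathfrak{T}$, and lower bounding $\mathbb{P}_M[D_H(\hat{\mathcal{S}}_k,\mathcal{S}_k^*)>2h]$ reduces to lower bounding a multi-way testing error that Fano's inequality controls from below.

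\textbf{Building the packing.} Starting from the base distribution $M_0$ used in Theorem \ref{theorem:converse}, whose top-$k$ gap already satisfies $\Delta_{k,h}\geq\alpha\sqrt{\log n/(rp\rho_{n,m})}$, I would obtain each $M_\mathcal{T}$ by a coordinated perturbation of the subset-wise ranking distributions that swaps the items in $\mathcal{T}\triangle\mathcal{T}_0$ across the top-$k$ boundary, scaled so that the normalized gap is preserved up to a constant factor. The combinatorial existence of $\mathfrak{T}$ follows from a Gilbert--Varshamov style packing count over size-$k$ subsets at pairwise Hamming distance exceeding $4h$, yielding
\begin{align}
\log|\mathfrak{T}|\;\gtrsim\;h\log\frac{\min\{k,n-k\}}{h};
\end{align}
the hypothesis $2h(1+\nu_2)\leq\min\{n-k,k,n^{1-\nu_1}\}$ is exactly what is needed to further guarantee $\log|\mathfrak{T}|\gtrsim \nu_1 h\log n$ whenever $n\geq c(\nu_1,\nu_2)$.

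\textbf{KL control and Fano.} A per-subset perturbation of amplitude $O(\alpha\sqrt{\log n/(rp\rho_{n,m})})$ over the $r$ rounds contributes a KL divergence of order $\alpha^2\log n/\rho_{n,m}$ per affected subset, and since each of the $2h$ swapped items appears in $\rho_{n,m}$ subsets, the total pairwise KL satisfies
\begin{align}
D_{\mathrm{KL}}\bigl(M_\mathcal{T}^{\otimes r}\|M_{\mathcal{T}'}^{\otimes r}\bigr)\;\lesssim\; h\,\alpha^2\log n.
\end{align}
Substituting the hypothesis $\alpha\leq\frac{\sqrt 2}{14}g(n,m,\vec{\beta})\sqrt{\nu_1\nu_2/(h(n,m)\rho_{n,m})}$ into Fano's inequality
\begin{align}
\sup_{M\in\mathscr{F}_{k,h}(\alpha)}\mathbb{P}_M[D_H(\hat{\mathcal{S}}_k,\mathcal{S}_k^*)>2h]\;\geq\;1-\frac{\max_{\mathcal{T}\neq\mathcal{T}'}D_{\mathrm{KL}}(M_\mathcal{T}^{\otimes r}\|M_{\mathcal{T}'}^{\otimes r})+\log 2}{\log|\mathfrak{T}|},
\end{align}
the right-hand side is at least $1/7$ once $n\geq c(\nu_1,\nu_2)$ makes the additive $\log 2$ negligible.

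\textbf{Main obstacle.} The delicate step is simultaneously enforcing three properties over the entire packing: (i) each $M_\mathcal{T}$ lies in $\mathscr{F}_{k,h}(\alpha)$ with $\mathcal{T}$ being its true top-$k$; (ii) the gap $\Delta_{k,h}$ is uniformly at least $\alpha\sqrt{\log n/(rp\rho_{n,m})}$ after all swaps; and (iii) the pairwise KL divergence scales only as $h\alpha^2\log n$ rather than accumulating a packing-size factor. Because the associated score of any item is a sum over many shared $(m-1)$-subsets, swaps near the top-$k$ boundary inadvertently shift the scores of non-swapped items, so the swap amplitudes must be designed carefully to keep non-swapped items strictly inside or outside the top-$k$. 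The slack conditions $2h(1+\nu_2)\leq\min\{n-k,k,n^{1-\nu_1}\}$ and $p\geq\log n/(4h(n,m)r)$ furnish this room: $\nu_2$ reserves buffer items whose unintended score shifts do not cross the boundary, while $\nu_1$ limits the packing radius so that the Fano ratio clears $6/7$.
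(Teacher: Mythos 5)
Your proposal takes essentially the same route as the paper: the paper likewise reduces the Hamming-tolerant lower bound to a multi-hypothesis Fano argument over a family of distributions of the form (\ref{eqn:worsedist}), one per element of a packing of $k$-sized sets at pairwise Hamming distance exceeding $4h$ (obtained via a packing lemma of Shah et al.\ with $\log L^* = \tfrac{9}{10}\nu_1\nu_2 h\log n$, rather than your generic Gilbert--Varshamov count), reuses the per-subset KL bound from the proof of Theorem~\ref{theorem:converse}, and concludes with Fano. The difficulty you flag about non-swapped items is resolved in the paper exactly as you anticipate, by the symmetric construction that keeps $\Delta_{k,h}=\Delta_k=g(n,m,\vec{\beta})\delta/\rho_{n,m}$ identical across all hypotheses; your displayed KL bound omits the $h(n,m)\rho_{n,m}/g^2$ normalization but this washes out once $\alpha$ is substituted.
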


Consider again the special case $m=2$: it can be verified that Theorem \ref{theorem:hamming_forward} is slightly stronger than the corresponding result in \cite{shah2017simple}, and Theorem \ref{theorem:hamming_converse} is precisely the same. 

\subsection{Proof Outlines of Theorems \ref{theorem:hamming_forward} and \ref{theorem:hamming_converse}}

We denote the upper bound of ranking item $a$ lower than item $b$ as $P$.
Theorem \ref{theorem:hamming_forward} is in fact a direct consequence of Theorem \ref{theorem:forward2}.  To see this, note that 
 Theorem \ref{theorem:forward2} implies that with probability at least $1-(k-h)(n-k-h)P$, $\tilde{\mathcal{S}_k}$ ranks every item in $\left\{1,...,k-h\right\}$ higher than every
item in the set $\left\{k+h+1,...,n\right\}$. 
Thus, we have either $\tilde{\mathcal{S}_k}\subseteq [k + h]$ or $[k-h] \subseteq \tilde{\mathcal{S}_k}$.  Either 
case would lead to $|\tilde{\mathcal{S}_k}\cap[k]| \geq k-h$, thereby proving Theorem \ref{theorem:hamming_forward}.

To prove Theorem \ref{theorem:hamming_converse}, we follow the approach in \cite{shah2017simple}. The following lemma is instrumental in the proof.

\begin{lemma}[Shah et al. \cite{shah2017simple}]
 \label{lemma:ksized} 
In the regime $2h\leq \frac{1}{\nu_1+\nu_2} \min\left\{n^{1-\nu_1} , k, n-k\right\}$ for some constants $\nu_1 \in (0, 1)$ and $\nu_2 \in (0, 1)$, and when n is larger than a $(\nu_1, \nu_2)$-dependent constant, there exists a subset ${b_1, . . . , b_L}\subseteq{\left\{0, 1\right\}}^{n/2}$
with cardinality $L \geq L^*\triangleq e^{\frac{9}{10} \nu_1\nu_2h\log n}$, such that
    $$D_H(b_j,\textbf{0})=2(1+\nu_2)h, \quad D_H(b_j,b_k)>4h$$ 
for all $j\neq k\in[L]$.
\end{lemma}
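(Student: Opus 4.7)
The plan is to construct the collection via a Gilbert--Varshamov style greedy packing argument on the set of constant-weight binary vectors in $\{0,1\}^{n/2}$, the classical route for producing constant-weight codes with prescribed minimum distance. First I would fix $w = 2(1+\nu_2)h$ and restrict attention to $\mathcal{W}$, the set of length-$n/2$ binary vectors of Hamming weight exactly $w$; every such vector automatically satisfies $D_H(b,\mathbf{0}) = 2(1+\nu_2)h$, and the identity $D_H(u,v) = 2(w - |\mathrm{supp}(u)\cap\mathrm{supp}(v)|)$ on $\mathcal{W}$ rewrites the pairwise requirement $D_H(b_j,b_k) > 4h$ as the intersection cap $|\mathrm{supp}(b_j)\cap\mathrm{supp}(b_k)| < 2\nu_2 h$. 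The task has therefore been reduced to packing $\mathcal{W}$ under a controlled-intersection constraint.

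Next, I would execute the standard greedy selection: initialize with any $b_1 \in \mathcal{W}$ and repeatedly append a vector of $\mathcal{W}$ at Hamming distance greater than $4h$ from every previously chosen vector, stopping when no such vector remains. Termination forces every element of $\mathcal{W}$ to lie in the distance-$4h$ ball around some chosen vector. A direct enumeration gives the ball size within $\mathcal{W}$ as $N := \sum_{i=0}^{2h}\binom{w}{i}\binom{n/2-w}{i}$, since distance $2i$ corresponds to swapping $i$ zeros with $i$ ones in the support; hence the greedy process delivers $L \geq \binom{n/2}{w}/N$ vectors.

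It then remains to verify that $\binom{n/2}{w}/N \geq \exp(\tfrac{9}{10}\nu_1\nu_2 h \log n)$ under the stated hypotheses. Lower-bounding $\log\binom{n/2}{w}$ by $w\log(n/(2w))$ via the standard entropy estimate and upper-bounding $\log N$ by $2h\log(e^2 w(n/2-w)/(4h^2)) + O(\log h)$, substituting $w = 2(1+\nu_2)h$ causes the dominant $h\log h$ terms to cancel and leaves a leading contribution of $2\nu_2 h \log(n/h) - O(h)$. The sparsity hypothesis $2h \leq n^{1-\nu_1}/(1+\nu_2)$ then converts $\log(n/h)$ into $\nu_1 \log n - O_{\nu_2}(1)$, producing a leading term of $2\nu_1\nu_2 h \log n - O(h)$.

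The main obstacle will be precisely this final bookkeeping step: extracting the explicit constant $\tfrac{9}{10}$ requires careful tracking of every additive error, namely the Stirling slack in $\log\binom{n/2}{w}$, the $(2h+1)$ multiplicative factor absorbed into $\log N$, and the $O_{\nu_2}(1)$ term when passing from $\log(n/h)$ to $\nu_1\log n$. The threshold $c(\nu_1,\nu_2)$ is then defined as the smallest $n$ for which these $O(h)$ corrections collectively fit inside the $\tfrac{11}{10}\nu_1\nu_2 h \log n$ slack between the true leading constant $2\nu_1\nu_2$ and the claimed $\tfrac{9}{10}\nu_1\nu_2$. Since the lemma is purely combinatorial and does not touch the Borda scores, this plan is independent of $m$ and of the rest of the paper.
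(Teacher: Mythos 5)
The paper itself offers no proof of this lemma: it is imported verbatim from Shah and Wainwright \cite{shah2017simple}, so there is no in-paper argument to compare against. Your proposal supplies a correct, self-contained derivation via the standard Gilbert--Varshamov greedy packing of constant-weight codes: restricting to weight $w=2(1+\nu_2)h$ vectors in $\{0,1\}^{n/2}$ handles the first condition automatically; the identity $D_H(u,v)=2\left(w-|\mathrm{supp}(u)\cap\mathrm{supp}(v)|\right)$ correctly converts the distance requirement into the intersection cap $|\mathrm{supp}(b_j)\cap\mathrm{supp}(b_k)|<2\nu_2 h$; the ball-volume count $N=\sum_{i=0}^{2h}\binom{w}{i}\binom{n/2-w}{i}$ is right; and the leading term $2\nu_1\nu_2 h\log n$ you extract leaves ample slack over the claimed exponent $\tfrac{9}{10}\nu_1\nu_2 h\log n$, so the $O(h)$ error terms are indeed absorbed once $n$ exceeds a $(\nu_1,\nu_2)$-dependent threshold. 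Three small points to tidy up in a full write-up: (i) you should note $w=2(1+\nu_2)h\le n/2$ in the stated regime and address the implicit integrality of $2(1+\nu_2)h$ (the lemma statement assumes it); (ii) the step $N\le(2h+1)\binom{w}{2h}\binom{n/2-w}{2h}$ needs the one-line check that the summands increase in $i$ up to $i=2h$, which holds because $n/2-w$ is far larger than $w$ here; (iii) the hypothesis as stated in this paper reads $2h\le\frac{1}{\nu_1+\nu_2}\min\{\cdot\}$ while you invoked $2h\le\frac{1}{1+\nu_2}n^{1-\nu_1}$ (the form appearing in Theorem 4) --- either constant gives $\log(n/h)\ge\nu_1\log n-O_{\nu_1,\nu_2}(1)$, so the discrepancy is harmless, but you should work with the constant actually stated. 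None of these affects the validity of the approach.
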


It was shown in \cite{shah2017simple} that constructing $L^*$ different probability distributions satisfying the two properties below would prove Theorem \ref{theorem:hamming_converse} when $m=2$. 
\begin{enumerate}
\setlength{\itemsep}{1pt}
    \item For every $i \in [L^*]$, let $\mathcal{S}_k^{i}\subseteq[n]$ denote the set of top-$k$ items under the $i$-th distribution. Then for every $k$-sized set $\mathcal{S} \in [n]$,
\begin{equation}
    \sum_{i=1}^L\mathbb{1}({D_H(\mathcal{S}, \mathcal{S}_k^{i}) \leq 2h}) \leq 1.\label{eqn:uniquedecode}
\end{equation}

\item If the underlying distribution is chosen uniformly at random from this set of $L^*$ distributions, then any estimator that attempts to identify the underlying distribution
$i \in [L^*]$ errs with probability at least $1-\frac{1}{7}$.\\
\end{enumerate}

We follow the approach in \cite{shah2017simple} to construct $L^*$ different $k$-sized subset of $[n]$.  For each $i \in [L^*]$, let $B_{i}$ denote a $[2(1 + \nu_2)h]$-sized subset of
$\left\{\frac{n}{2} + 1,...,\frac{n}{2}\right\}$. The items in $B_{i}$ correspond to the $2(1 + \nu_2)h$ positions being 1 in the $i$-th string as specified by Lemma \ref{lemma:ksized}. Define the sets $A_{i}=\{1,...,k-2(1+\nu_2)h \}$. The $k$-sized subset $\mathcal{S}^{i}_k$ is then constructed as $\mathcal{S}^{i}_k=A_{i}\cup B_{i}$, which is valid since $2h\leq \frac{1}{1+\nu_2}k $. By Lemma \ref{lemma:ksized}, for
any distinct $i_1, i_2 \in [L]$, we have $D_H(A_{i_1} \cup B_{i_1}, A_{i_2} \cup B_{i_2}) \geq
4h+1$. This implies that the condition (\ref{eqn:uniquedecode}) is satisfied, since otherwise the existence of a set $\mathcal{S}$ that makes the LHS of (\ref{eqn:uniquedecode}) greater than $1$ would make the two corresponding sets $\mathcal{S}^{i_1}_k$ and $\mathcal{S}^{i_2}_k$ differ by strictly less than $(4h+1)$.

We now construct $L^*$ probability distributions. For every $i \in [L^*]$ and the $k$-sized subset $\mathcal{S}^{i}_k$ given above, let the distribution $M^{i}_{\vec{v}}$ be as in (\ref{eqn:worsedist}), 
with the conditions replaced by 
\begin{itemize}
    \item C1: $\vec{v}_{1:q}\subseteq S_k^{i} \text{ and }\vec{v}_{q+1:m}\subseteq [n]\setminus S_k^{i}$;
    
    \item C2: $\vec{v}_{1:m-q}\subseteq
    [n]\setminus S_k^{i}\text{and } \vec{v}_{m-q+1:m}\subseteq S_k^{i}$.
\end{itemize}
This leads to 
\begin{equation}
    \Delta_{k,h}=\Delta_k=\frac{g(n,m,\vec{\beta})}{\rho_{n,m}}\delta \label{threshold and delta}.
\end{equation}

For any $i_1,i_2\in[L^* ]$, an upper bound for the Kullback-Leibler divergence between two probability distributions is needed. For any $\mathcal{A}\subseteq[n]$ where $|\mathcal{A}|=m$, $\mathbb{P}^{i_1}(V_{\mathcal{A}}^{(\ell)})\neq\mathbb{P}^{i_2}(V_{\mathcal{A}}^{(\ell)})$ only if $\exists j\in\mathcal{A}$, such that $j\in B_{i_1}\cup B_{i_2}$. In Theorem \ref{theorem:converse}, we have shown that $D_{KL}(\mathbb{P}^{i_1}(V_{\mathcal{A}}^{(\ell)})||\mathbb{P}^{i_2}(V_{\mathcal{A}}^{(\ell)}))\leq {\binom{m}{q}}^{(-1)}\frac{8\delta^2}{1-\delta^2}$, from which it follows that
\begin{align}
    D_{KL}(\mathbb{P}^{i_1}||\mathbb{P}^{i_2})\leq 4prh(n,m)(1+\nu_2)h\frac{8\delta^2}{1-\delta^2}.
\end{align}
Under the assumptions on $\alpha$ and $p$, it can be verified straightforwardly that $\delta\leq\frac{\sqrt{2}}{7}$, from which it follows that 
\begin{equation}
    D_{KL}(\mathbb{P}^{i_1}||\mathbb{P}^{i_2})\leq \frac{3}{4}\nu_1\nu_2h\log n.
\end{equation}
Suppose that the underlying distribution is drawn uniformly at random from the constructed set, and the index is $i^*$. Using Fano's inequality, any estimator $\hat{i^*}$ must have error probability lower bounded by
\begin{align}
    \mathbb{P}(\hat{i^*}\neq i^*)\geq
    1-\frac{\frac{3}{4}\nu_1\nu_2 h \log n+\log2} { \frac{9}{10} \nu_1\nu_2h\log n}\geq\frac{1}{7}.
\end{align}
This establishes property 2) for the distributions. The proof of Theorem \ref{theorem:hamming_converse} can now be completed straightforwardly along the same line as in \cite{shah2017simple}.

\section{Numerical Results}
\label{section: experiment}
In this section, we evaluate and compare the performances of three algorithms, including the proposed Borda counting-based algorithm, via numerical simulations on both synthetic data and real-world data.
\subsection{Algorithms for  Comparison}
We compare three algorithms:
\begin{enumerate}
    \item \textbf{Borda counting-based algorithm}, which is the algorithm studied in this paper. 
    In the experiments we need to specify the score functions $\beta$, and we consider $m+4$ different score functions as given in (\ref{eqn:beta assignment}). The subscripts in the definitions represent the positions of the items in the ranking. For example, $\bar{\beta}^{(i)}_j$ is the score assigned to the item ranked in the $j$-th position under the score function $\bar{\beta}^{(i)}$. Among these score functions, $\tilde{\beta}^{(i)}$ has the first $i$ positions set as $1$, and other positions set to zero. The score functions $\check{\beta}^{(1)}$ and $\check{\beta}^{(2)}$ are inversely proportional functions of the ranking positions $j$, the score functions $\bar{\beta}^{(1)}$ and $\bar{\beta}^{(2)}$ are in linear forms, and $\hat{\beta}$ is in a quadratic form.    
    \begin{align}
    \label{eqn:beta assignment}
         &\tilde{\beta}^{(i)}_j=\mathbbm{1}(j\leq i),\quad  j\in[m],\quad i\in[m-1]\notag\\
         &\check{\beta}^{(1)}_j=\frac{2m-2}{j+m-2}-1,\quad j\in[m]\notag\\
         &\check{\beta}^{(2)}_j= \frac{1}{j},\quad j\in[m]\notag\\
         &\bar{\beta}^{(1)}_j= 1-\frac{j-1}{m-1},\quad  j\in[m]\notag\\
           &\bar{\beta}^{(2)}_j= 1-\frac{j-1}{m},\quad  j\in[m]\notag\\
         &\hat{\beta}_j= \frac{m^2-j^2}{m^2-1},\quad  j\in[m].
    \end{align}
     \item \textbf{Borda counting-based algorithm with normalization}, which is similar to the first method, with only one difference. Here the score obtained from the comparisons of the same subset is normalized by the number of observed comparisons for this subset. More precisely, the score of any item $a$ is 
    \begin{equation}
      S_a=\sum_{\mathcal{A}^-\subseteq{[n]}\setminus \{a\}}\big(\frac{1}{Z_{\mathcal{A}}+1}\sum_{\ell\in{[r]}}X_{a,\mathcal{A}^-}^{(\ell)}\big),
\end{equation}
where $Z_{\mathcal{A}}$ is the number of observed comparisons for the set $\mathcal{A}$.
    \item \textbf{Spectral MLE-based algorithm}, which was proposed in \cite{jang2017optimal}. The spectral MLE based algorithm converts the $m$-wise comparison data into pairwise comparison data, and then uses the pairwise spectral MLE algorithm on the generated data to select the top-$k$ items.
\end{enumerate}

\subsection{Results with Synthetic Data}
We generate the data with $n=10$ total items and $m=4$ items participating in each comparison. The underlying probability distribution is based on the well-known Plackett-Luce (PL)  model and a noisy variant of the model, as described in the following.
\begin{figure}
\centering
  \includegraphics[width=.85\linewidth]{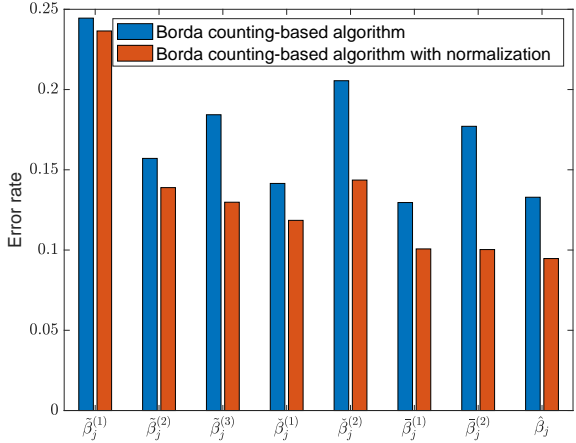}
  \caption{Error rates of different $\beta$ assignments}
  \label{fig:synthetic1_sub2}
\end{figure}
\begin{enumerate}
    \item \textbf{Plackett-Luce model.} See Definition \ref{def:PL}
    We define the weight vector $\vec{w}=(w_1,w_2,...,w_n)$, where $w_i$ is the weight of item $i$. 
    We use the two kinds of weight assignments  $\vec{w}^{(1)}$ and $\vec{w}^{(2)}$ in our simulation:
    \begin{align}
        \label{eqn:s}
        w_i^{(1)}&=15+i,\quad i\in[n]\notag\\
        w_i^{(2)}&=1.1^i,\quad i\in[n].
    \end{align}
    {Notice that by Lemma~\ref{lemma:consistency of para model}, the ranking given by the associated scores is indeed consistent with the ranking of the underlying weight vector, therefore, it can be viewed as the ground truth. }
    \item \textbf{PL model with noise.} We create the underlying distribution by first applying the PL model with weights in equation (\ref{eqn:s}), and then we add some Gaussian noise for each permutation of each $m$-wise group.
    \begin{equation}
        M_{v_1,v_2,...,v_m}=M_{\vec{v}}=\max\left\{\prod_{k=1}^m f_k(\vec{v})+\epsilon_{\vec{v}},0\right\},
    \end{equation}
    where $\epsilon_{\vec{v}} \; \sim_{ i.i.d}  \;  \mathcal{N}(0,\sigma^2)$. Normalization is then enforced to ensure $\sum_{\vec{v}\doteq\mathcal{A}}M_{v_1,v_2,...,v_m}=1$.
\end{enumerate}

\begin{figure}
\label{fig:synthetic2}
\centering
\begin{subfigure}{.48\textwidth}
  \centering
  \includegraphics[width=.85\linewidth]{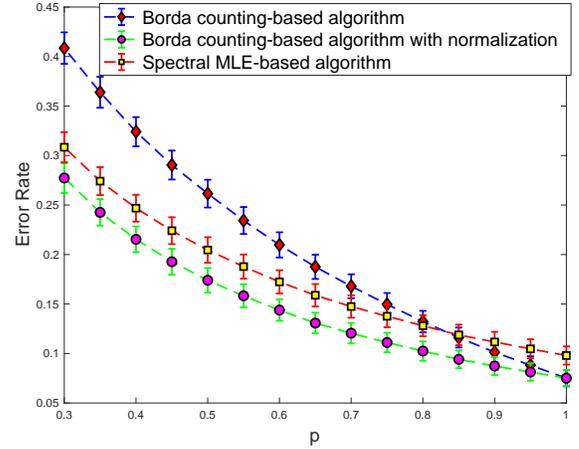}
  \caption{Results of different $p$ (PL model)}
  \label{fig:synthetic2_sub1}
\end{subfigure}
\begin{subfigure}{.48\textwidth}
  \centering
  \includegraphics[width=.85\linewidth]{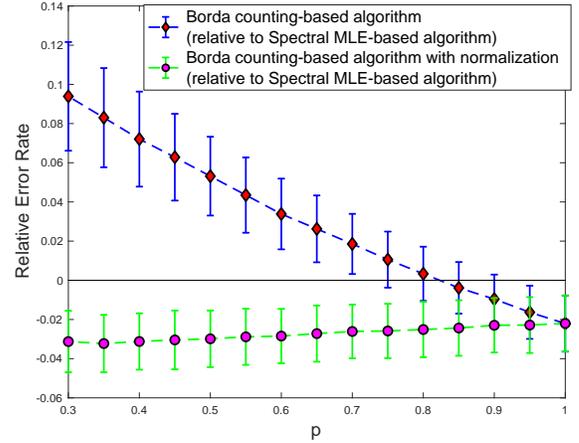}
  \caption{Results of different $p$ (PL model with noise, $\sigma=0.025$)}
  \label{fig:synthetic2_sub2}
\end{subfigure}
\caption{Error rates of three algorithms under different value of $p$ with underlying weight $\vec{w}=\vec{w}^{(1)}$.}

\end{figure}

\begin{figure}
\label{fig:synthetic2}
\centering
\begin{subfigure}{.48\textwidth}
  \centering
  \includegraphics[width=.85\linewidth]{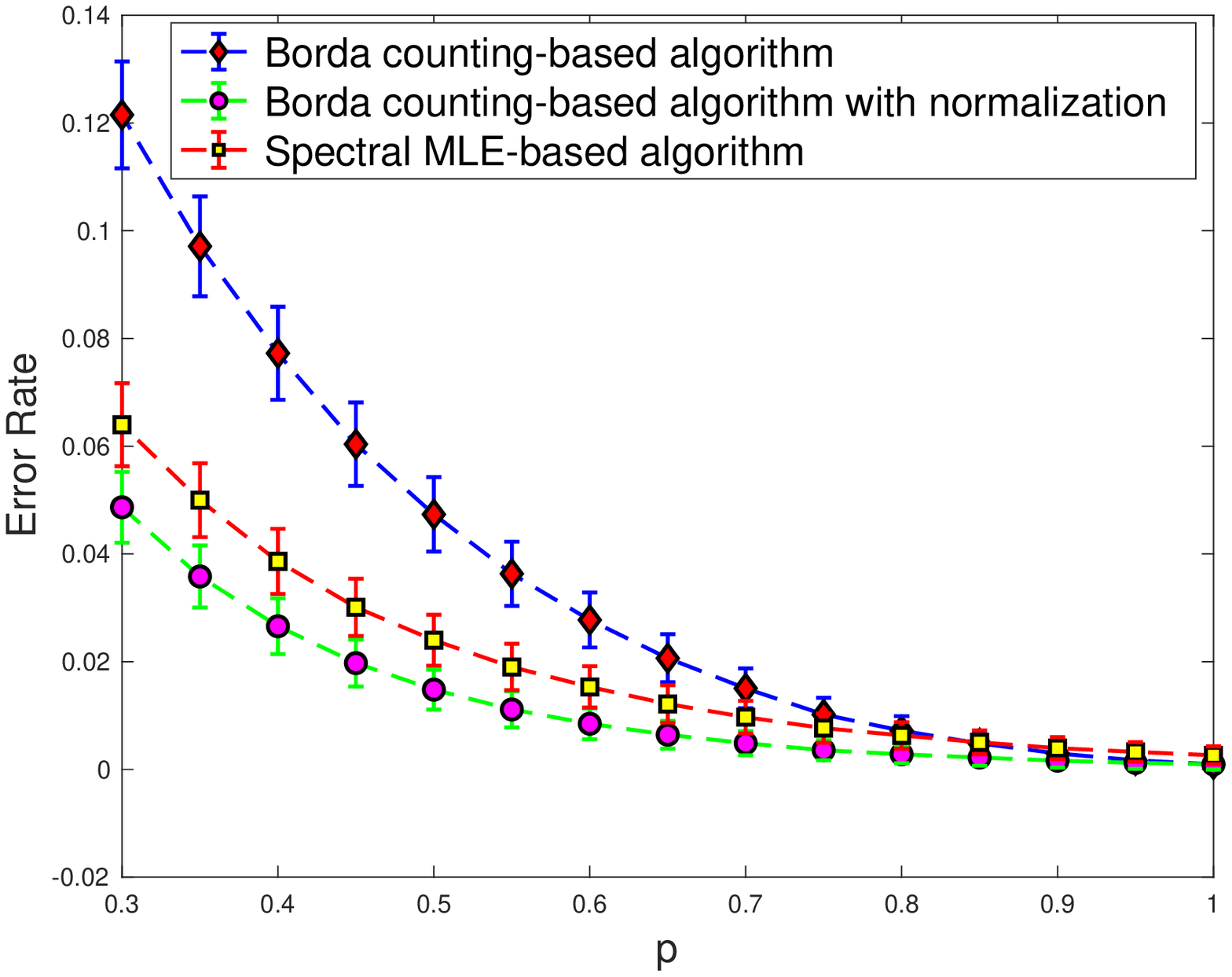}
  \caption{Results of different $p$ (PL model)}
  \label{fig:p2_PL}
\end{subfigure}
\begin{subfigure}{.48\textwidth}
  \centering
  \includegraphics[width=.85\linewidth]{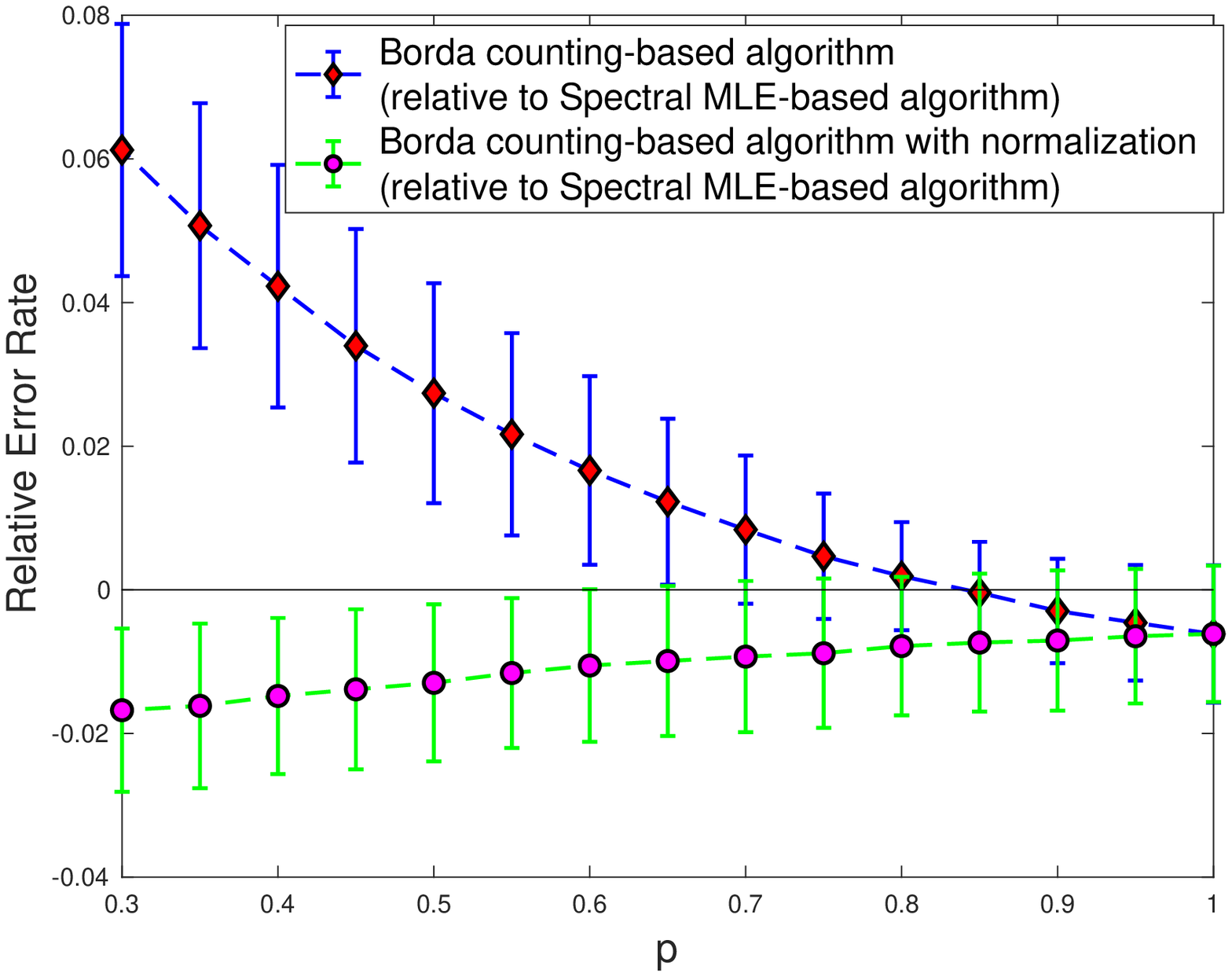}
  \caption{Results of different $p$ (PL model with noise, $\sigma=0.025$)}
  \label{fig:p2_PL with noise}
\end{subfigure}
\caption{Error rates of three algorithms under different value of $p$ with underlying weight $\vec{w}=\vec{w}^{(2)}$. }

\end{figure}

We first evaluate the performance of Borda counting-based algorithm and Borda counting-based algorithm with normalization under different assignments of score functions $\beta$ as in equations (\ref{eqn:beta assignment}). We generate the partial ranking data with observation probability $p=0.8$, and the total number of comparisons for each $m$-wise group $r=50$. The underlying distribution follows the PL model with weight vector $\vec{w}=\vec{w}^{(1)}$. 
The simulation results are presented in Figure \ref{fig:synthetic1_sub2}. It can be seen that the score assignment $\bar{\beta}^{(1)}$ has a lower error rate than others, and thus we  fix $\bar{\beta}^{(1)}$ in the rest of simulations. 

We then compare the performance of the three algorithms as functions of $p$ using the PL model with the underlying weight assignments  $\vec{w}=\vec{w}^{(1)}$ and $\vec{w}=\vec{w}^{(2)}$, respectively. The results are given in Figures \ref{fig:synthetic2_sub1} and \ref{fig:p2_PL}. Correspondingly, the results of using the PL model with Gaussian noise are presented in Figures \ref{fig:synthetic2_sub2} and \ref{fig:p2_PL with noise}. 
For each $p$  value, every $100$ sample trials were independently run and the errors are averaged over these runs. { We run a total of $1000$ such experiments for each $p$. The averaged error rate and standard error are calculated over these $1000$ runs.  For PL model with noise, we plot the relative error rate with respect to the spectral MLE-based algorithm}.

It is known that the spectral MLE algorithm is suitable and order-wise optimal for the BTL model \cite{chen2015spectral}, and thus we expect the spectral MLE-based algorithm to perform well for the PL model, which is a generalized version of the BTL model. From the numerical result, we first observe that the performance of Borda counting-based algorithm becomes more competitive to the spectral MLE-based algorithm when $p$ is large, but less so when $p$ is small. This behavior appears to being caused by the larger variation in the number of observed partial rankings for different cases when $p$ is small. This variation worsens the performance of the Borda counting-based algorithm relatively more than the spectral MLE-based algorithm\footnote{The Borda counting-based algorithm with normalization is in fact motivated by this observation.}. It can be seen that by incorporating normalization in the Borda counting-based algorithm, it is able to remedy this deficiency at smaller $p$ values. With the noisy PL model, it can be seen that the performances of all three algorithms are worse than under the noiseless PL model. However, the relative performances among the three methods do not change significantly. It should be noted that the normalized version of the Borda counting-based algorithm is considerably more complex than the Borda counting-based algorithm itself, since instead of maintaining only the accumulated scores for the items, we must also maintain the number of comparisons for each subset. This increase in complexity is similarly needed for the spectral MLE-based algorithm. 

\subsection{Results with Real-world Data}
The real-world data are collected from the \texttt{PrefLib} website \cite{sushi}. 
This dataset (sushi data) contains 5000 complete rankings of 10 kinds of sushi. Here we use the process below to find the ground truth: first run the Borda counting-based algorithm on the entire dataset with $m=10$ and the score function $\beta=\bar{\beta}^{(1)}$ as in Eqn.~\eqref{eqn:beta assignment},  and then sort the total score of each item to have the complete ranking.

For each batch size $i$ and each trial, we randomly choose $i$ complete rankings from the dataset as our mini-batch. In the comparison, since the comparison data are complete rankings of $n$ items in the sushi dataset, we extract the $m$-wise comparison data from the dataset as follows. For each complete ranking in the dataset, we randomly select an $m$-wise group among the total $n$ items, and then use the ranking result of the chosen items as our $m$-wise comparison data. Then we run the three algorithms on the generated $m$-wise data of the chosen mini-batch to determine if the top-$k$ estimate is correct. In the experiment, we choose $m=7$, $k=3$, and the score function $\beta=\bar{\beta}^{(1)}$ with $m=7$. {We run $100$ trials each time to calculate the error rate. The eventual error rate and standard error are calculated over all the error rates in $1000$ runs. }The simulation results are plotted in Figure \ref{fig:real}.  Since the real-world data is unlikely to follow the PL model, we expect the Borda counting-based methods to perform better than the spectral MLE-based method. The numerical results indeed confirm this expectation. 

\begin{figure}
\centering
\includegraphics[width=.85\linewidth]{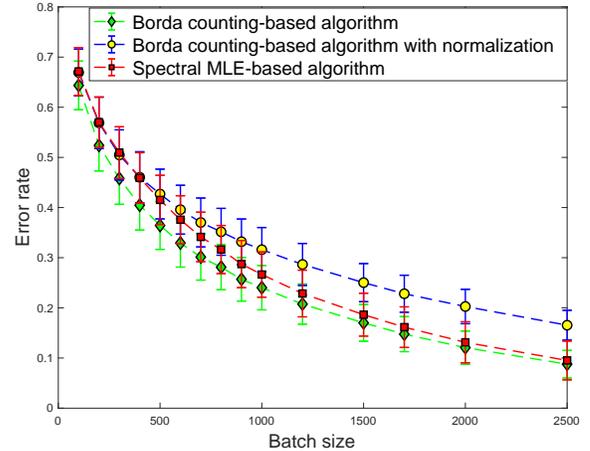}
\caption{Error rates of three algorithms under different batch size  $i$. }
\label{fig:real}
\end{figure}

\section{Proof Details of Theorem \ref{theorem:forward2}}
\label{sec:prooflemmas}

As previously discussed in the proof outline, the proof of Theorem \ref{theorem:forward2} boils down to proving Proposition \ref{lemma:forward2} and Proposition \ref{lemma:forward1}. In this section, we provide the proofs of these two propositions. 

\begin{proof}[Proof of Proposition \ref{lemma:forward2}]
Consider any item $a\in\mathcal{S}_k^*$ and $b\in[n]\setminus\mathcal{S}_k^*$, then define the event that $W_b> W_a$ as $\mathcal{E}_{ba}$, i.e.,
\begin{align}
\mathbf{Pr}(\mathcal{E}_{ba})=\mathbf{Pr}(W_b-W_a> 0). 
\end{align}
We start by noting that
\begin{align*}
W_b-W_a
=&\sum_{\ell\in{[r]}}\bigg{(}\sum_{\mathcal{A}^-\subseteq[n]\setminus\{b\}}X_{b,\mathcal{A}^-}^{(\ell)}-\sum_{\mathcal{A}^-\subseteq[n]\setminus\{a\}}X_{a,\mathcal{A}^-}^{(\ell)}\bigg{)}.
\end{align*}
It follows that
\begin{align}
&W_b-W_a=\sum_{\ell\in{[r]}}\sum_{\mathcal{A}^-\subseteq[n]\setminus\{a,b\}}\left({X}_{b,\mathcal{A}^-}^{(\ell)}-{X}_{a,\mathcal{A}^-}^{(\ell)}\right)\nonumber\\
&\quad-\sum_{\ell\in{[r]}}\sum_{\mathcal{A}^{--}\subseteq [n]\setminus\{a,b\}}\left(X_{a,\{b,\mathcal{A}^{--}\}}^{(\ell)}-X_{b,\{a,\mathcal{A}^{--}\}}^{(\ell)}\right).
\end{align}
In order to apply the concentration inequalities, define the centralized scores. For any $a\in [n]$
\begin{align*}
\bar{X}_{a,\mathcal{A}^-}^{(\ell)}&\triangleq X_{a,\mathcal{A}^-}^{(\ell)}-\mathbb{E}\left(X_{a,\mathcal{A}^-}^{(\ell)}\right)=X_{a,\mathcal{A}^-}^{(\ell)}-p\sum_{t=1}^{m}\beta_{t}R_{a,\mathcal{A}^-}(t),
\end{align*}
and the centralized cross-score
\begin{align}
&\bar{X}_{\{a,b\},\mathcal{A}^{--}}^{(\ell)}\triangleq X_{a,\{b,\mathcal{A}^{--}\}}^{(\ell)}-X_{b,\{a,\mathcal{A}^{--}\}}^{(\ell)}\nonumber\\
&\qquad\qquad-\mathbb{E}\left(X_{a,\{b,\mathcal{A}^{--}\}}^{(\ell)}\right)+\mathbb{E}\left(X_{b,\{a,\mathcal{A}^{--}\}}^{(\ell)}\right)\nonumber\\
&=X_{a,\{b,\mathcal{A}^{--}\}}^{(\ell)}-X_{b,\{a,\mathcal{A}^{--}\}}^{(\ell)}\nonumber\\
&\quad -\left(p\sum_{t=1}^{m}\beta_{t}R_{a,\{b,\mathcal{A}^{--}\}}(t)-p\sum_{t=1}^{m}\beta_{t}R_{b,\{a,\mathcal{A}^{--}\}}(t)\right).
\end{align}
It follows after some simple algebra that
\begin{align}
 &W_b-W_a=\sum_{\ell\in{[r]}}\sum_{\substack{\mathcal{A}^-\subseteq[n]\setminus\{a,b\}\\|\mathcal{A}^-|=m-1}}\left(\bar{X}_{b,\mathcal{A}^-}^{(\ell)}-\bar{X}_{a,\mathcal{A}^-}^{(\ell)}\right)\notag\\
 &\qquad-\sum_{\ell\in{[r]}}\sum_{\substack{\mathcal{A}^{--}\subseteq [n]\setminus\{a,b\}\\|\mathcal{A}^{--}|=m-2}}\bar{X}_{\{a,b\},\mathcal{A}^{--}}^{(\ell)}\nonumber\\
&\qquad +r\sum_{\substack{\mathcal{A}^-\subseteq[n]\setminus\{b\}\\|\mathcal{A}^-|=m-1}}p\sum_{t=1}^{m}\beta_{t}R_{b,\mathcal{A}^-}(t)\notag\\
&\qquad -r\sum_{\substack{\mathcal{A}^-\subseteq[n]\setminus\{a\}\\|\mathcal{A}^-|=m-1}}p\sum_{t=1}^{m}\beta_{t}R_{a,\mathcal{A}^-}(t)\nonumber,\\
\end{align}
\begin{align}
W_b-W_a=\Delta \bar{X}_{a,b}-rp\rho_{n,m}(\tau_a-\tau_b),
\end{align}
where for notation simplicity we have defined 
\begin{align}
\label{eqn: sum of var}
\Delta \bar{X}_{a,b}
&\triangleq \sum_{\ell\in{[r]}}\sum_{\substack{\mathcal{A}^-\subseteq[n]\setminus\{a,b\}}}\left(\bar{X}_{b,\mathcal{A}^-}^{(\ell)}-\bar{X}_{a,\mathcal{A}^-}^{(\ell)}\right)\nonumber\\
&\qquad\qquad-\sum_{\ell\in{[r]}}\sum_{\substack{\mathcal{A}^{--}\subseteq [n]\setminus\{a,b\}}}\bar{X}_{\{a,b\},\mathcal{A}^{--}}^{(\ell)}.
\end{align}
Because $a$ is in the top-$k$ and $b$ is not in the top-$k$, we have
\begin{align}
rp\rho_{n,m}(\tau_a-\tau_b)\geq rp\rho_{n,m}\Delta_k.
\end{align}
Since $W_b-W_a\geq 0$ is equivalent to  $\Delta \bar{X}_{a,b}\geq rp\rho_{n,m}(\tau_a-\tau_b)$, we have
\begin{align}
\label{ineq:probability of separated by delta}
\mathbf{Pr}(W_b-W_a\geq 0)\leq \mathbf{Pr}\left[\Delta \bar{X}_{a,b}\geq rp\rho_{n,m}\Delta k\right].
\end{align}
In order to bound the probability on the right hand side, we need the Hoeffding's inequality, which is stated in the lemma below for completeness.
\begin{lemma}[Hoeffding's inequality]\label{lemma:Hoeffding}
Let $X_1, ..., X_n$ be independent random variables with the empirical mean $\bar{X} = \frac{1}{n}(X_1 + \cdots + X_n).$ When it is known that $X_i$ are strictly bounded by the intervals $[a_i, b_i]$, we have
\begin{align}
\label{ineq: Hoeffding}
\mathbf {Pr} \left({\bar{X}}-\mathbb {E} \left[{\bar{X}}\right]\geq t\right)\leq \exp \left(-{\frac {2n^{2}t^{2}}{\sum _{i=1}^{n}(b_{i}-a_{i})^{2}}}\right)
\end{align}
 \end{lemma}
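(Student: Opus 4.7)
The plan is to prove this via the standard Chernoff bound approach combined with Hoeffding's lemma for the moment generating function of a bounded random variable. First I would introduce the centered variables $Y_i \triangleq X_i - \mathbb{E}[X_i]$, which are independent and satisfy $Y_i \in [a_i - \mathbb{E}[X_i],\, b_i - \mathbb{E}[X_i]]$, an interval of the same length $b_i - a_i$. The goal then becomes to bound $\mathbf{Pr}\left[\sum_{i=1}^n Y_i \geq nt\right]$.

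The second step is the exponential Markov inequality: for any $s > 0$,
\begin{equation*}
\mathbf{Pr}\left[\sum_{i=1}^n Y_i \geq nt\right] \leq e^{-snt}\,\mathbb{E}\left[e^{s \sum_i Y_i}\right] = e^{-snt}\prod_{i=1}^n \mathbb{E}\left[e^{s Y_i}\right],
\end{equation*}
where independence is used to factor the MGF. The third step, which is the technical heart of the argument, is to bound each factor by Hoeffding's lemma: if $Y$ is a zero-mean random variable supported on $[\alpha,\beta]$, then $\mathbb{E}[e^{sY}] \leq \exp\!\left(s^2(\beta-\alpha)^2/8\right)$. I would prove this by writing any $y \in [\alpha,\beta]$ as a convex combination $y = \lambda \beta + (1-\lambda)\alpha$ with $\lambda = (y-\alpha)/(\beta-\alpha)$, applying convexity of $e^{sy}$ to get $e^{sy} \leq \lambda e^{s\beta} + (1-\lambda) e^{s\alpha}$, taking expectations (which eliminates the linear term since $\mathbb{E}[Y]=0$), and then showing that $\log\!\left(\tfrac{-\alpha}{\beta-\alpha}e^{s\beta} + \tfrac{\beta}{\beta-\alpha}e^{s\alpha}\right) \leq s^2(\beta-\alpha)^2/8$ by a Taylor expansion argument: define $\varphi(u) = \log\!\left(\tfrac{-\alpha}{\beta-\alpha}e^{u} + \tfrac{\beta}{\beta-\alpha}e^{-(\beta-\alpha-|\alpha|)\cdot\ldots}\right)$ — more cleanly, set $\varphi(s) = -s\alpha + \log(1 - p + p e^{s(\beta-\alpha)})$ with $p = -\alpha/(\beta-\alpha)$, verify $\varphi(0) = \varphi'(0) = 0$, and check $\varphi''(s) \leq (\beta-\alpha)^2/4$ for all $s$; then the second-order Taylor remainder gives the claim.

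Combining the pieces, each $Y_i$ lies in an interval of length $b_i - a_i$, so
\begin{equation*}
\mathbf{Pr}\left[\bar{X} - \mathbb{E}[\bar{X}] \geq t\right] \leq \exp\!\left(-snt + \frac{s^2}{8}\sum_{i=1}^n (b_i - a_i)^2\right).
\end{equation*}
The final step is to optimize over $s > 0$. The right-hand side is minimized at $s^* = 4nt/\sum_i(b_i-a_i)^2$, and substituting this value produces the exponent $-2n^2 t^2 / \sum_i (b_i - a_i)^2$, which is exactly the stated bound.

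The main obstacle is the proof of Hoeffding's lemma itself, specifically the uniform upper bound $\varphi''(s) \leq (\beta - \alpha)^2/4$. This requires recognizing $\varphi''(s)$ as the variance of a Bernoulli random variable taking values in a shifted interval and using the fact that a variance of a random variable in an interval of length $L$ is at most $L^2/4$ (achieved by the two-point symmetric distribution). Once this variance bound is in place, the rest of the argument is a mechanical Taylor expansion plus an optimization in $s$, both of which are routine.
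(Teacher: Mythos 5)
The paper gives no proof of this lemma: it is quoted verbatim as the classical Hoeffding inequality, included only ``for completeness'' as a tool for the proof of Proposition~\ref{lemma:forward2}, so there is nothing in the paper to compare your argument against. Your proposal is the standard textbook derivation (centering, exponential Markov bound, Hoeffding's lemma for the MGF of a bounded zero-mean variable, optimization over $s$) and it is correct; the optimizer $s^* = 4nt/\sum_i(b_i-a_i)^2$ does produce the exponent $-2n^2t^2/\sum_i(b_i-a_i)^2$. One small sign slip: with $p=-\alpha/(\beta-\alpha)$ the cumulant function should be $\varphi(s)=s\alpha+\log(1-p+pe^{s(\beta-\alpha)})$, equivalently $-hp+\log(1-p+pe^{h})$ with $h=s(\beta-\alpha)$; with your $-s\alpha$ one gets $\varphi'(0)=-2\alpha\neq 0$ in general and the Taylor argument would not close. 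The checks you describe ($\varphi(0)=\varphi'(0)=0$ and $\varphi''\le(\beta-\alpha)^2/4$ via the variance bound for a random variable supported on an interval) are exactly the right ones for the correctly signed $\varphi$, so this is cosmetic.
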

 
Now we view each centralized score and centralized cross-score in (\ref{eqn: sum of var}) as the random variables $X_i$ in the Hoeffding's inequality. It is clear that these centralized scores and centralized cross-scores are independent, which allows us to apply the Hoeffding's inequality. Moreover, they are zero-mean. These random variables have the following lower bounds and upper bounds, due to their definitions.
\begin{align}
-\mathbb{E}X_{a,\mathcal{A}^-}&\leq\bar{X}_{a,\mathcal{A}^-}\leq 1-\mathbb{E}X_{a,\mathcal{A}^-}\nonumber\\
-\mathbb{E}X_{b,\mathcal{A}^-}&\leq\bar{X}_{b,\mathcal{A}^-}\leq 1-\mathbb{E}X_{b,\mathcal{A}^-}
\nonumber\\
\beta_m-1-c_{\{a,b\},\mathcal{A}^{--}}&\leq\bar{X}_{\{a,b\},\mathcal{A}^{--}}\leq 1-\beta_m-c_{\{a,b\},\mathcal{A}^{--}},
\end{align}
where $c_{\{a,b\},\mathcal{A}^{--}}=\mathbb{E}\left(X_{a,\{b,\mathcal{A}^{--}\}}^{(\ell)}\right)
-\mathbb{E}\left(X_{b,\{a,\mathcal{A}^{--}\}}^{(\ell)}\right)$.

With these bounds, we can bound the denominator in the right hand side of (\ref{ineq: Hoeffding}) as follows
\begin{align}  
    &\sum_{\ell\in{[r]}}\sum_{\substack{\mathcal{A}^-\subseteq{[n]}\setminus\left\{a,b\right\}\\\left|\mathcal{A}^-\right|=m-1}}1+\sum_{\ell\in{[r]}}\sum_{\substack{\mathcal{A}^-\subseteq{[n]}\setminus\left\{a,b\right\}\\\left|\mathcal{A}^-\right|=m-1}}1\notag\\
    &\qquad+\sum_{\ell\in{[r]}}\sum_{\substack{\mathcal{A}^{--}\subseteq{[n]}\setminus\left\{a,b\right\}\\\left|\mathcal{A}^{--}\right|=m-2}}(2(1-\beta_m))^2\notag\\
    &=r\rho_{n,m}(4(1-\beta_m)^2(\frac{m-1}{n-1})+2\frac{n-m}{n-1}).
\end{align}
Applying the Hoeffding's inequality, we obtain
\begin{align}
&\mathbf{Pr}(W_b-W_a> 0)\nonumber\\
&\leq\exp\left(\frac{-\left(rp\rho_{n,m}\Delta_k\right)^2}{r\rho_{n,m}(2(1-\beta_m)^2\frac{m-1}{n-1}+\frac{n-m}{n-1})}\right)\notag\\
&\leq n^{-\frac{\alpha^2p}{(2(1-\beta_m)^2(\frac{m-1}{n-1})+\frac{n-m}{n-1})}}
\end{align}
Taking a union bound over the possible pairs $(a,b)$ gives the desired result.
\end{proof}

\begin{proof}[Proof of Proposition \ref{lemma:forward1}]
The proof of Proposition \ref{lemma:forward1} is similar to the Proposition \ref{lemma:forward2}, and the difference is in the way of bounding the probability on the right-hand side of (\ref{ineq:probability of separated by delta}).
Instead of Hoeffding's inequality, we invoke Bernstein's inequality, which is stated below for completeness. 
\begin{lemma}[Bernstein's inequality]\label{lemma:Bernstein}
Let $Y_{1},...,Y_{n}$ be independent zero-mean random variables. Suppose that  $|Y_{i}|\leq M$  almost surely, for all  $i$. Then, for all positive $t$,
 we have that
 $$ \mathbb {P} \left(\sum _{i=1}^{n}Y_{i}>t\right)\leq \exp \left(-{\frac {{\tfrac {1}{2}}t^{2}}{\sum \mathbb {E} \left[Y_{i}^{2}\right]+{\tfrac {1}{3}}Mt}}\right).$$
 \end{lemma}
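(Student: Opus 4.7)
The plan is to prove Bernstein's inequality by the standard Chernoff/moment-generating-function route. First I would apply Markov's inequality in exponential form: for any $\lambda>0$,
\begin{align*}
\mathbb{P}\left(\sum_{i=1}^n Y_i > t\right) \leq e^{-\lambda t}\,\mathbb{E}\left[e^{\lambda \sum_i Y_i}\right] = e^{-\lambda t}\prod_{i=1}^n \mathbb{E}\left[e^{\lambda Y_i}\right],
\end{align*}
where the product form uses independence. Then I would reduce the problem to bounding the individual moment generating function $\mathbb{E}[e^{\lambda Y_i}]$ at the end, optimize over $\lambda$, and be done.

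The key technical step is the per-variable bound. I would expand $e^{\lambda Y_i}=1+\lambda Y_i+\sum_{k\geq 2}\lambda^k Y_i^k/k!$ and take expectations. Since $\mathbb{E}[Y_i]=0$, the linear term vanishes. For $k\geq 2$, I use $|Y_i|\leq M$ almost surely to write $|Y_i|^k\leq M^{k-2}Y_i^2$, so $|\mathbb{E}[Y_i^k]|\leq M^{k-2}\mathbb{E}[Y_i^2]$. Summing the geometric-type tail gives the textbook estimate
\begin{align*}
\mathbb{E}\left[e^{\lambda Y_i}\right] \leq 1+\mathbb{E}[Y_i^2]\sum_{k\geq 2}\frac{\lambda^k M^{k-2}}{k!} \leq 1+\frac{\lambda^2\,\mathbb{E}[Y_i^2]/2}{1-\lambda M/3},
\end{align*}
valid for $0\leq \lambda M<3$, which I will combine with $1+x\leq e^x$ to get $\mathbb{E}[e^{\lambda Y_i}]\leq \exp\bigl(\tfrac{1}{2}\lambda^2\mathbb{E}[Y_i^2]/(1-\lambda M/3)\bigr)$.

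Plugging this into the Chernoff step and setting $\sigma^2:=\sum_i\mathbb{E}[Y_i^2]$ yields
\begin{align*}
\mathbb{P}\left(\sum_{i=1}^n Y_i > t\right) \leq \exp\!\left(-\lambda t + \frac{\lambda^2 \sigma^2/2}{1-\lambda M/3}\right).
\end{align*}
The final step is to optimize the exponent in $\lambda\in[0,3/M)$. I would pick $\lambda^\star=t/(\sigma^2+Mt/3)$, which is in the admissible range, and verify by direct substitution that it produces the claimed exponent $-\tfrac{1}{2}t^2/(\sigma^2+Mt/3)$. This choice is not the exact minimizer (which would require solving a quadratic), but it is the standard near-optimal choice that yields a clean closed form and the stated inequality.

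The main obstacle, and really the only nontrivial point, is the series estimate $\sum_{k\geq 2}x^k/k!\leq x^2/(2(1-x/3))$; everything else is routine. I would verify this by comparing term by term: $k!\geq 2\cdot 3^{k-2}$ for $k\geq 2$, so $\sum_{k\geq 2}x^k/k!\leq (x^2/2)\sum_{j\geq 0}(x/3)^j=x^2/(2(1-x/3))$ whenever $x<3$. With that combinatorial bound in hand, the rest is algebraic manipulation and the single-variable optimization described above.
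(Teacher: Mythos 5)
Your proof is correct. Note that the paper does not actually prove this lemma: it is quoted verbatim as a classical result (stated ``for completeness'') and used as a black box in the proof of Proposition 3, so there is no in-paper argument to compare against. Your Chernoff/moment-generating-function derivation is the standard textbook proof, and every step checks out: the factorial bound $k!\geq 2\cdot 3^{k-2}$ gives the series estimate $\sum_{k\geq 2}x^k/k!\leq \tfrac{x^2/2}{1-x/3}$ for $x<3$, the choice $\lambda^\star=t/(\sigma^2+Mt/3)$ satisfies $\lambda^\star M<3$, and direct substitution yields exactly the exponent $-\tfrac{1}{2}t^2/\bigl(\sum_i\mathbb{E}[Y_i^2]+\tfrac{1}{3}Mt\bigr)$ claimed in the statement.
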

 Since the centralized scores and centralized cross-scores are zero-mean and independent, we only need to bound the sum of the variances of the centralized random variables, for which we have the following lemma.

\begin{lemma}
\label{lemma:variancebound}
\begin{align}
&\Sigma_{\bar{X}_{a,b}}\triangleq\sum_{\substack{\mathcal{A}^-\subseteq[n]\setminus\{a,b\}}}\left(\mathbb{E}\left(\bar{X}_{b,\mathcal{A}^-}^{(\ell)}\right)^2+\mathbb{E}\left(\bar{X}_{a,\mathcal{A}^-}^{(\ell)}\right)^2\right)\nonumber\\
&\qquad\qquad\qquad+\sum_{\substack{\mathcal{A}^{--}\subseteq [n]\setminus\{a,b\}}}\mathbb{E}\left(\bar{X}_{\{a,b\},\mathcal{A}^{--}}^{(\ell)}\right)^2\nonumber\\
&\qquad\leq (2p-p^2)\rho_{n,m}-p\rho_{n,m}\Delta_k.
\end{align}
\end{lemma}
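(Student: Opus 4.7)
The plan is to bound each variance in the sum $\Sigma_{\bar{X}_{a,b}}$ individually via $\text{Var}(X) = \mathbb{E}(X^2) - (\mathbb{E}X)^2$, combine them using the identity $\sum_{\mathcal{A}^-\subseteq[n]\setminus\{a\}}Y_{a,\mathcal{A}^-} = \rho_{n,m}\tau_a$ (where $Y_{a,\mathcal{A}^-} \triangleq \sum_t \beta_t R_{a,\mathcal{A}^-}(t)$, i.e., $\mathbb{E}X_{a,\mathcal{A}^-}^{(\ell)}/p$), and then leverage $\tau_a \leq 1$ together with $\tau_a - \tau_b \geq \Delta_k$ to match the target.

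For each single-item centralized variance, direct computation gives $\text{Var}(\bar{X}_{a,\mathcal{A}^-}^{(\ell)}) = p\sum_t \beta_t^2 R_{a,\mathcal{A}^-}(t) - p^2 Y_{a,\mathcal{A}^-}^2$; since $\beta_t \in [0,1]$ implies $\beta_t^2 \leq \beta_t$, this is bounded above by $pY_{a,\mathcal{A}^-} - p^2 Y_{a,\mathcal{A}^-}^2$. For each cross variance, $X_{a,\{b,\mathcal{A}^{--}\}}$ and $X_{b,\{a,\mathcal{A}^{--}\}}$ share a common Bernoulli$(p)$ observation indicator $U$, so I would write $X_a - X_b = U(\beta_{t_a} - \beta_{t_b})$, expand $(\beta_{t_a} - \beta_{t_b})^2 = \beta_{t_a}^2 + \beta_{t_b}^2 - 2\beta_{t_a}\beta_{t_b}$, apply $\beta^2 \leq \beta$, and drop the nonnegative cross-product to reach $\text{Var}(\bar{X}_{\{a,b\},\mathcal{A}^{--}}^{(\ell)}) \leq p(Y_{a,\{b,\mathcal{A}^{--}\}} + Y_{b,\{a,\mathcal{A}^{--}\}}) - p^2(Y_{a,\{b,\mathcal{A}^{--}\}} - Y_{b,\{a,\mathcal{A}^{--}\}})^2$.

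Summing over all subsets, the linear-in-$Y$ contributions collapse into $p\rho_{n,m}(\tau_a + \tau_b)$ by partitioning each $\mathcal{A}^- \subseteq [n]\setminus\{a\}$ according to whether $b \in \mathcal{A}^-$ (and symmetrically for $b$). Since $a \in \mathcal{S}_k^*$ and $b \notin \mathcal{S}_k^*$, $\tau_a - \tau_b \geq \Delta_k$, and combined with $\tau_a \leq 1$ this gives $\tau_a + \tau_b = 2\tau_a - (\tau_a - \tau_b) \leq 2 - \Delta_k$, so the linear piece is at most $2p\rho_{n,m} - p\rho_{n,m}\Delta_k$, already supplying two of the three target terms.

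The main obstacle is producing the remaining $-p^2\rho_{n,m}$: the negative quadratic correction $Q \triangleq \sum_{\mathcal{A}^-}(Y_{a,\mathcal{A}^-}^2 + Y_{b,\mathcal{A}^-}^2) + \sum_{\mathcal{A}^{--}}(Y_{a,\{b,\mathcal{A}^{--}\}} - Y_{b,\{a,\mathcal{A}^{--}\}})^2$ from the per-variance bounds, combined with the slack $2p\rho_{n,m}(1-\tau_a)$ left over from the coarse $\tau_a \leq 1$, must dominate $p^2\rho_{n,m}$. Keeping this slack explicit recasts the requirement as $Q \geq \rho_{n,m}[1 - 2(1-\tau_a)/p]$, which is trivial when $\tau_a \leq 1 - p/2$; for $\tau_a > 1 - p/2$, I would invoke $Y^2 \geq 2Y - 1$ (from $(Y-1)^2 \geq 0$) on the single-item squares, trading squares for linear quantities that can be reassembled via the $\tau$-identity, and handle the cross-term squares with $(\cdot)^2 \geq 0$ or Cauchy--Schwarz as needed. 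Making the constants line up with no residual $p$-dependent error appears to rely on $p \leq 1/2$, which is the implicit regime under which Proposition~\ref{lemma:forward1} invokes this lemma.
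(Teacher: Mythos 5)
Your decomposition, the identity collapsing the linear terms into $p\rho_{n,m}(\tau_a+\tau_b)$, and the use of $\tau_a-\tau_b\ge\Delta_k$ all match the paper, and for $m=2$ your chain happens to coincide with the paper's because $Y_{a,\{b\}}+Y_{b,\{a\}}=1$. For $m\ge 3$, however, there is a genuine gap, and it originates in your cross-term bound. You bound $\mathbb{E}\big[(X_{a,\{b,\mathcal{A}^{--}\}}-X_{b,\{a,\mathcal{A}^{--}\}})^2\big]$ by $p\,(Y_{a,\{b,\mathcal{A}^{--}\}}+Y_{b,\{a,\mathcal{A}^{--}\}})$, which can be as large as $2p$; the paper instead uses that $a$ and $b$ occupy distinct positions in the same observed ranking, so the squared score difference is at most $1$ and the second moment is at most $p$. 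The extra $p\,Y_{b,\{a,\cdot\}}$ you carry per cross-term is supposed to be repaid by the quadratic correction $Q$ plus the slack from $\tau_a\le 1$, but it is not: take $m\ge 3$, $\vec{\beta}=(1,1,0,\dots,0)$, let $a$ always land in a top-two position, and let $b$ land in a top-two position whenever $a$ is present but in a zero-score position otherwise. Then $\tau_a=1$ (so your slack vanishes), every $Y_{a,\cdot}=Y_{b,\{a,\cdot\}}=1$ and $Y_{b,\mathcal{A}^-}=0$ for $\mathcal{A}^-\not\ni a$, so $Q=p^2\binom{n-2}{m-1}<p^2\rho_{n,m}$ and your final bound exceeds the target by exactly $p^2\binom{n-2}{m-2}$. (The lemma itself is safe here: the true $\Sigma_{\bar{X}_{a,b}}$ equals $p(1-p)\binom{n-2}{m-1}$; it is your chain of bounds that overshoots.) Your fallback $Y^2\ge 2Y-1$ gives nothing back in this configuration since it is tight at $Y=1$, and no downstream manipulation can recover an already-loosened per-term bound.

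The paper's route avoids both weak links. It bounds each cross second moment by $p$, keeps the quadratic correction $-p^2Y_{a,\mathcal{A}^-}^2$ only for item $a$, and—crucially—after writing $\tau_a+\tau_b\le 2\tau_a-\Delta_k$ it does not invoke $\tau_a\le 1$ globally but redistributes $2p\rho_{n,m}\tau_a$ back over the subsets, so each single-item summand becomes $2pY-p^2Y^2\le 2p-p^2$ and each cross summand becomes $S-S^2\le p-p^2$ with $S=p\,(Y_{a,\{b,\cdot\}}-Y_{b,\{a,\cdot\}})\in[-p,p]$. The hypothesis $p\le\tfrac12$ is used precisely there, to place the maximizer of $S-S^2$ at $S=p$, not where you invoke it. Summing $(2p-p^2)\binom{n-2}{m-1}+(2p-p^2)\binom{n-2}{m-2}$ then yields $(2p-p^2)\rho_{n,m}$ with no residual.
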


The proof of this lemma can be found in the appendix. With this lemma, we can apply Bernstein's inequality on $\mathbf{Pr}(W_b-W_a>0)$, from which we obtain
\begin{align}
&\mathbf{Pr}(W_b-W_a> 0)\notag\\
&\leq \exp\left(\frac{-\frac{1}{2}\left(rp\rho_{n,m}\Delta_k\right)^2}{ rp(2-p)\rho_{n,m}-rp\rho_{n,m}\Delta_k+\frac{2}{3}\rho_{n,m}rp\Delta_k}\right)\notag\\
&\leq n^{-\frac{\alpha^2}{(4-2p)}}.
\end{align}
Taking the union bound over the possible pairs $(a,b)$ gives the desired result.
\end{proof}

\section{Proof of Theorem \ref{theorem:converse}}
\label{sec:proofconverse}

For each $a\in\{k,\ldots,n\}$, denote the $k$-sized subset $S^*[a]=\{1,2,\ldots,k-1\}\cup\{a\}$. In order to prove the converse result, we construct a probability distribution $M\in \mathscr{F}_k(\alpha)$ such that it is difficult for any method to find the top-$k$ items, which is:
\begin{align}
M^a_{\vec{v}}=\left\{
\begin{array}{ll}
(m!)^{-1}(1+\delta)& \text{if C1}\\
(m!)^{-1}(1-\delta)& \text{if C2}\\
(m!)^{-1}& \text{otherwise}
\end{array}
\right., \label{eqn:worsedist}
\end{align}
where $\delta\in [0,1]$ is a parameter to be specified later, and the conditions C1 and C2 are given as: 
\begin{itemize}
    \item C1: $\vec{v}_{1:q}\subseteq\mathcal{S}^*[a] \text{ and }\vec{v}_{q+1:m}\subseteq [n]\setminus\mathcal{S}^*[a]$;
    \item C2: $\vec{v}_{1:m-q}\subseteq [n]\setminus\mathcal{S}^*[a] \text{ and } \vec{v}_{m-q+1:m}\subseteq\mathcal{S}^*[a]$.
\end{itemize}

The theorem is established through a lemma and a proposition. 

\begin{lemma}
\label{prop:converse1}
For the distribution $M^a_{\vec{v}}$, where $a\in\{k,\ldots,n\}$, we have 
\begin{align}
\Delta_k=\frac{g(n,m,\vec{\beta})}{\rho_{n,m}}\delta. \label{eqn:Deltak}
\end{align}
\end{lemma}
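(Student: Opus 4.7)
The plan is to exploit the symmetry of $M^a_{\vec v}$: the distribution depends on the items only through set membership in $\mathcal{S}^*[a]$ versus $[n]\setminus\mathcal{S}^*[a]$, so all items in $\mathcal{S}^*[a]$ share a common associated score $\tau^+$ and all items outside share a common value $\tau^-$. Thus $\Delta_k=\tau^+-\tau^-$, and it suffices to compute these for representative $i\in\mathcal{S}^*[a]$ and $j\in[n]\setminus\mathcal{S}^*[a]$ by isolating the subsets $\mathcal{A}^-$ on which the perturbation $\pm\delta$ actually takes effect.

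For $i\in\mathcal{S}^*[a]$, conditions C1 or C2 can be satisfied on $\mathcal{A}=\{i\}\cup\mathcal{A}^-$ only when $|\mathcal{A}\cap\mathcal{S}^*[a]|=q$, i.e.\ $|\mathcal{A}^-\cap\mathcal{S}^*[a]|=q-1$; for all other $\mathcal{A}^-$ the distribution on $\mathcal{A}$ is uniform and contributes only the common baseline $\frac{1}{m}\sum_t\beta_t$ to $\sum_t\beta_t R_{i,\mathcal{A}^-}(t)$. Under the qualifying split, $i$ sits at each $t\in\{1,\ldots,q\}$ in exactly $(q-1)!(m-q)!$ C1-permutations and at each $t\in\{m-q+1,\ldots,m\}$ in exactly $(q-1)!(m-q)!$ C2-permutations, so the per-subset correction is $\frac{\delta(q-1)!(m-q)!}{m!}\bigl(\sum_{t=1}^{q}\beta_t-\sum_{t=m-q+1}^{m}\beta_t\bigr)=\frac{\delta(q-1)!(m-q)!}{m!}\,G$, where $G:=\sum_{t=1}^{q}(\beta_t-\beta_{m-q+t})$. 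Multiplying by the count $\binom{k-1}{q-1}\binom{n-k}{m-q}$ of qualifying $\mathcal{A}^-$ yields the nonconstant part of $\tau^+$. The analogous computation for $j\notin\mathcal{S}^*[a]$ requires $|\mathcal{A}^-\cap\mathcal{S}^*[a]|=q$ (giving $\binom{k}{q}\binom{n-k-1}{m-q-1}$ qualifying subsets), places $j$ at each $t\in\{q+1,\ldots,m\}$ in $q!(m-q-1)!$ C1-permutations and at each $t\in\{1,\ldots,m-q\}$ in $q!(m-q-1)!$ C2-permutations, and produces the per-subset correction $\frac{\delta q!(m-q-1)!}{m!}\bigl(\sum_{t=q+1}^{m}\beta_t-\sum_{t=1}^{m-q}\beta_t\bigr)$. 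The identity
\[
\sum_{t=q+1}^{m}\beta_t-\sum_{t=1}^{m-q}\beta_t=\sum_{t=m-q+1}^{m}\beta_t-\sum_{t=1}^{q}\beta_t=-G,
\]
derived by adding and subtracting $\sum_{t=1}^{m}\beta_t$, sidesteps any case analysis on whether the two index windows $\{1,\ldots,q\}$ and $\{m-q+1,\ldots,m\}$ overlap and shows the $j$-correction equals $-\frac{\delta q!(m-q-1)!}{m!}\,G$.

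Subtracting, dividing by $\rho_{n,m}$, and collecting common factors then gives
\[
\Delta_k=\frac{\delta\,G}{\rho_{n,m}\,m!}\Bigl[A_{k-1}^{q-1}A_{n-k}^{m-q}+A_{k}^{q}A_{n-k-1}^{m-q-1}\Bigr].
\]
The closing algebraic step applies the one-step factorizations $A_{n-k}^{m-q}=(n-k)\,A_{n-k-1}^{m-q-1}$ and $A_{k}^{q}=k\,A_{k-1}^{q-1}$, which collapse the bracket to $n\,A_{k-1}^{q-1}A_{n-k-1}^{m-q-1}$; substituting this reproduces exactly $g(n,m,\vec\beta)\,\delta/\rho_{n,m}$ by the definition of $g$ in Theorem~\ref{theorem:converse}. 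The main bookkeeping hazard is the permutation counts when $q$ is close to $m/2$ and the two position windows overlap; the telescoping identity above is what makes this step uniform across all $(m,q)$ without case splits.
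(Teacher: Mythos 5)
Your proposal is correct and follows essentially the same route as the paper: compute the associated score of a representative item inside and outside $\mathcal{S}^*[a]$, observe that the uniform baseline cancels in the difference, and combine the two $\delta$-corrections via $A_{n-k}^{m-q}=(n-k)A_{n-k-1}^{m-q-1}$ and $A_k^q=kA_{k-1}^{q-1}$ to recover $g(n,m,\vec\beta)$. The paper simply asserts the two score formulas, whereas you supply the explicit permutation counting and the telescoping identity that makes the argument uniform in $q$; these details check out.
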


The proof of this lemma can be found in the appendix. 

\begin{prop}
\label{prop:converse2}
For any distinct $a,b\in \{k,k+1,\ldots,n\}$, 
\begin{align}
D_{\text{KL}}(\mathbb{P}^a||\mathbb{P}^b) \leq rph(n,m)\frac{4\delta^2}{(1-\delta^2)}.
\end{align}
\end{prop}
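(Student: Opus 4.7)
The plan is to exploit the tensorization of KL divergence together with the sparse structure of the perturbation that distinguishes $M^a$ from $M^b$. By independence of observations across rounds and across subsets, and because the ``not-observed'' event (with mass $1-p$ under both measures) carries zero contribution,
\[
D_{KL}(\mathbb{P}^a\|\mathbb{P}^b)=rp\sum_{\mathcal{A}:|\mathcal{A}|=m}D_{KL}\bigl(M^a_{\mathcal{A}}\,\big\|\,M^b_{\mathcal{A}}\bigr).
\]
Since the symmetric difference $\mathcal{S}^*[a]\,\Delta\,\mathcal{S}^*[b]=\{a,b\}$, any subset $\mathcal{A}$ disjoint from $\{a,b\}$ satisfies $\mathcal{A}\cap\mathcal{S}^*[a]=\mathcal{A}\cap\mathcal{S}^*[b]$, so the conditions C1 and C2 are identical under both distributions and $M^a_{\mathcal{A}}=M^b_{\mathcal{A}}$. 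Only subsets containing at least one of $a$ or $b$ contribute.

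On each contributing subset I would run a symmetric pairing argument. A permutation $\vec{v}$ of $\mathcal{A}$ carries one of three weights, $(1+\delta)/m!$, $1/m!$, or $(1-\delta)/m!$, under each of $M^a$ and $M^b$. The worst pair, with $M^a_{\vec{v}}=(1+\delta)/m!$ and $M^b_{\vec{v}}=(1-\delta)/m!$, occurs precisely when $\vec{v}$ is C1 under $M^a$ and C2 under $M^b$; transposing $a$ with $b$ inside $\vec{v}$, or (when only one of $a,b$ lies in $\mathcal{A}$) pairing with the reversal-induced twin permutation, produces a sister permutation with the two weights swapped. Pooling the two members of each pair collapses their joint contribution to
\[
\frac{2\delta}{m!}\log\frac{1+\delta}{1-\delta}\;\leq\;\frac{4\delta^2}{(1-\delta^2)\,m!},
\]
using the elementary bound $\log((1+\delta)/(1-\delta))\leq 2\delta/(1-\delta^2)$ valid on $[0,1)$. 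Pairs that involve the intermediate weight $1/m!$ obey the tighter estimate $(1+\delta)\log(1+\delta)+(1-\delta)\log(1-\delta)\leq \delta^2/(1-\delta^2)$ (obtained from $f(0)=f'(0)=0$ and $f''(x)=2/(1-x^2)$) and are absorbed into the same constant.

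It remains to tally the number of paired permutations over all contributing subsets. Splitting by $s=|\mathcal{A}\cap\{1,\ldots,k-1\}|$ and distinguishing subsets that contain both $a,b$ from those containing exactly one of them, the two binomial terms $\binom{k-1}{q-1}\binom{n-k}{m-q}$ and $\binom{k}{q}\binom{n-k-1}{m-q-1}$ in $h(n,m)$ arise as the counts of the two configurations in which the differentiated item occupies the first-$q$ block versus the last-$(m-q)$ block of the C1/C2 pattern, while the averaging factor $\binom{m}{q}^{-1}$ accounts for the placements of the ``winner'' block among the $m$ positions. Combining these counts with the per-pair bound and the prefactor $rp$ yields the claimed inequality. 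I expect the main obstacle to lie in this final bookkeeping: one must verify that every permutation with $M^a_{\vec{v}}\neq M^b_{\vec{v}}$ is absorbed exactly once into a well-defined pair, taking care of the boundary cases $s\in\{q-1,q,q+1\}$ and of subsets containing both $a$ and $b$, where two distinct types of differing permutations coexist and potential double-counting must be suppressed.
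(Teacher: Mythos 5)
Your proposal follows essentially the same route as the paper's proof: tensorize the KL divergence over rounds and subsets, observe that only subsets meeting $\{a,b\}$ contribute, bound each per-subset divergence by a constant multiple of $\omega_{m,q}\,p\,\delta^2/(1-\delta^2)$ via the same elementary logarithmic inequalities, and count the contributing subsets by $|\mathcal{A}\cap[k-1]|$ to recover the two binomial terms in $h(n,m)$ (the paper consolidates three raw counts via Pascal's identity). Your pairing of sister permutations is just a reorganization of the paper's closed-form evaluation of each subset's KL as a multiple of $\mu(\delta)$, $\upsilon(\delta)$, or their sum, and the remaining bookkeeping you flag is exactly the case analysis the paper carries out.
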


We are now ready to prove Theorem \ref{theorem:converse}. Suppose the underlying distribution is drawn uniformly at random from the set $\left\{M^a| a\in [n]\setminus[k-1]\right\} $, and the true index is $a^*$. By Fano's inequality, any estimator $\hat{a}$ must have an error probability lower-bounded by
\begin{align}
\label{eqn:fano}
\mathbb{P}_M[\hat{a}\neq a^*]\geq 1-\frac{rph(n,m)\frac{4\delta^2}{(1-\delta^2)}+\log 2}{\log (n-k+1)}.
\end{align}
Since $\alpha\leq\frac{\sqrt{2}}{7}g(n,m,\vec{\beta})\sqrt{\frac{1}{h(n,m)\rho_{n,m}}}$ and  $p\geq\frac{\log n}{4rh(n,m)}$, we can choose $\Delta_k=\frac{\sqrt{2}}{7}g(n,m,\vec{\beta})\sqrt{\frac{1}{h(n,m)\rho_{n,m}}}\sqrt{\frac{\log n}{rp\rho_{n,m}}}$. Combining these bounds with (\ref{eqn:Deltak}) in  Proposition \ref{prop:converse1} gives
\begin{equation}
\label{eqn:delta_upperbound}
\delta\leq\frac{\sqrt{2}}{7}\sqrt{\frac{\log n}{rph(n,m)}}\leq\frac{2\sqrt{2}}{7}.
\end{equation}
Finally (\ref{eqn:fano}) and (\ref{eqn:delta_upperbound}) give
\begin{align}
\mathbb{P}_M[\hat{a}\neq a^*]&\geq1-\frac{rph(n,m)\frac{4{(\frac{\sqrt{2}}{7}\sqrt{\frac{\log n}{rph(n,m)}})}^2}{(1-{(\frac{2\sqrt{2}}{7})}^2)}+\log 2}{\log n-\log 2}\nonumber\\
&\geq 1-\frac{\frac{8}{41}\log n+\log 2}{\log n-\log 2}\geq\frac{1}{7},
\end{align}
where we also used $2k\leq n$ and $n\geq 7$, and this completes the proof.

\section{Conclusion}

We have considered using the Borda counting algorithm on noisy $m$-wise ranking data to select the top-$k$ items, which is a generalization of a previous work using only pairwise comparisons. Our analysis confirmed the importance of  the associated score separation $\Delta_k$. In high and low observation probability regimes, our analysis revealed that the error probability of the algorithm can be bounded differently. The resultant bound is also tighter than that given by Shah et al. when we apply the analysis back to the pairwise case. For the converse direction, we established an error probability lower bound. The gap between the upper and lower bound was analyzed. These results were further extended to the approximate top-$k$ selection problem. Through numerical simulation, we have observed that the Borda counting-based algorithm is competitive to the spectral MLE-based algorithm, particularly in the high observation probability regime.

\appendix
In this section, we first establish a relation between the PL model and the non-parametric model. Recall the definition of the PL model:
\begin{definition}
\label{def:PL}
\textbf{Plackett-Luce model.} The PL model is a parametric model assuming the existence of a ``quality" parameter $w_i\in\mathbb{R}$ for each item $i$, and requiring that the probability of observing the ranking $\vec{v}=(v_1,v_2,...,v_m)$ for the ranked comparison among $\vec{v}\doteq\mathcal{A}$ is given by a specific function. More precisely, 
    \begin{equation}
        M_{v_1,v_2,...,v_m}=M_{\vec{v}}=\prod_{k=1}^m f_k(\vec{v}),
    \end{equation}
    where 
    \begin{equation}
        f_k(\vec{v})=\frac{w_{v_k}}{\sum_{i=k}^nw_{v_i}}.\label{eqn:pl}
    \end{equation}
\end{definition}

The PL model can also be recovered from the Thurstone model \cite{hajek2014minimax}.
\begin{definition}[\cite{hajek2014minimax}] 
\label{def:PL_Thurstone}
A partial ranking $\sigma$ : $[|S|] \rightarrow S$
is generated from $\{\theta^*_i
, i \in S\}$ under the PL model in two steps:  (1) independently assign each item $i\in S$ an unobserved value $Z_i$
,
exponentially distributed with mean $e^{-\theta^*_i}$ ;
(2) select $\sigma$ so that $Z_{\sigma(1)} \leq Z_{\sigma(2)} \leq...\leq Z_{\sigma(|S|)}$
. Here $\theta^*_i=-\log w_i$.
\end{definition}
\begin{lemma}[Restatement of Lemma 1]
If the underlying distribution follows the PL model, then the ranking of the associate scores is consistent with the ranking of the weight vector regardless of the assignment of score vector $\vec{\beta}$, i.e., $\forall\vec{\beta}$ satisfying $1=\beta_{1}\geq \beta_{2}\geq ...\geq \beta_{m}\geq 0$, and $\forall a, b\in[n]$, if $w_a\geq w_b$, we have $\tau_a\geq\tau_b$.
\end{lemma}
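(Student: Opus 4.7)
The plan is to exploit the latent-exponential (Thurstone) representation of the PL model recalled in Definition~\ref{def:PL_Thurstone} to build a coupling, and then match the terms defining $\tau_a$ and $\tau_b$ pointwise.

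First I would recall that a PL sample on any subset $S\subseteq[n]$ is distributionally equivalent to sorting independent exponential clocks $Y_i\sim\mathrm{Exp}(w_i)$, $i\in[n]$, in increasing order. Since $w_a\geq w_b$, I couple $Y_a$ and $Y_b$ through a common uniform $U\sim\mathrm{Uniform}(0,1)$ by setting $Y_a=-w_a^{-1}\log U$ and $Y_b=-w_b^{-1}\log U$. This forces $Y_a\leq Y_b$ almost surely while keeping $(Y_a,Y_b)$ jointly independent of $\{Y_j\}_{j\neq a,b}$, so the marginal distribution of every rank of interest is preserved.

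Next I would partition the sum defining $\tau_a$ according to whether $b\in\mathcal{A}^-$, and symmetrically for $\tau_b$. Every $\mathcal{A}^-$ in $\tau_a$ containing $b$ has the form $\{b\}\cup\mathcal{A}^{--}$ for some $\mathcal{A}^{--}\subseteq[n]\setminus\{a,b\}$; it is paired with $\{a\}\cup\mathcal{A}^{--}$ in $\tau_b$, and both terms refer to ranks inside the common $m$-set $\{a,b\}\cup\mathcal{A}^{--}$. Every $\mathcal{A}^-$ in $\tau_a$ avoiding $b$ (equivalently, a subset of $[n]\setminus\{a,b\}$) is paired with the identical $\mathcal{A}^-$ in $\tau_b$, giving a comparison of $a$'s rank in $\{a\}\cup\mathcal{A}^-$ with $b$'s rank in $\{b\}\cup\mathcal{A}^-$. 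A quick count shows this accounts for every term in both sums exactly once.

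For each paired comparison I would apply the pointwise dominance inequality induced by the coupling. The rank of $a$ in any set $S\ni a$ equals $1+|\{j\in S\setminus\{a\}:Y_j<Y_a\}|$. In the first pairing the two ranks are computed from the same $Y$-values on $\{a,b\}\cup\mathcal{A}^{--}$, and $Y_a\leq Y_b$ forces $a$'s position to be no larger than $b$'s. In the second pairing the same inequality yields $|\{j\in\mathcal{A}^-:Y_j<Y_a\}|\leq|\{j\in\mathcal{A}^-:Y_j<Y_b\}|$, so again $a$'s position in its subset is no larger than $b$'s. Because $\vec{\beta}$ is non-increasing, a smaller position maps to a weakly larger $\beta$-value, so the $\beta$-weighted expected score of $a$ dominates that of $b$ in every pair. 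Summing over all pairs and normalizing by $\rho_{n,m}$ delivers $\tau_a\geq\tau_b$. The only subtle point I anticipate is the bookkeeping for the second pairing, which compares ranks in two distinct subsets $\{a\}\cup\mathcal{A}^-$ and $\{b\}\cup\mathcal{A}^-$; the rank-comparison step there is legitimate precisely because the coupling fixes all common $Y_j$'s and leaves the marginals correct, so I expect no serious obstacle beyond this verification.
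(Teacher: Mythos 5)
Your decomposition into the two pairings matches the paper's, and your treatment of the second pairing (subsets $\mathcal{A}^-\subseteq[n]\setminus\{a,b\}$, comparing $a$'s rank in $\{a\}\cup\mathcal{A}^-$ with $b$'s rank in $\{b\}\cup\mathcal{A}^-$) is sound: there the rank of $a$ depends only on $(Y_a,\{Y_j\}_{j\in\mathcal{A}^-})$ and the rank of $b$ only on $(Y_b,\{Y_j\}_{j\in\mathcal{A}^-})$, and the comonotone coupling preserves both of these joint laws while forcing $Y_a\le Y_b$ pointwise. This is essentially the paper's own argument for that case, namely the monotonicity of $\mathbb{E}X_{a,\mathcal{A}^-}$ in $w_a$ via the Thurstonian representation.

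The first pairing, however, contains a genuine gap. For a set $S=\{a,b\}\cup\mathcal{A}^{--}$ containing both items, you must compare $\mathbb{E}[\beta_{\mathrm{pos}(a)}]$ with $\mathbb{E}[\beta_{\mathrm{pos}(b)}]$ under the PL ranking of $S$, and both quantities depend on the \emph{joint} law of $(Y_a,Y_b)$. The comonotone coupling $Y_a=-w_a^{-1}\log U$, $Y_b=-w_b^{-1}\log U$ destroys that joint law: under the coupling $a$ beats $b$ almost surely, whereas under PL $b$ beats $a$ with probability $w_b/(w_a+w_b)>0$. Hence the coupled expectation of $a$'s score over-estimates, and that of $b$'s score under-estimates, the true PL values, so the pointwise inequality you derive under the coupling does not transfer to the PL expectations---the distortion goes in exactly the wrong direction. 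Some exchange argument is indispensable here. The paper does it combinatorially: for each permutation $\vec{v}_1$ of $S$ placing $a$ at position $i$ and $b$ at position $j>i$, it pairs $\vec{v}_1$ with the permutation $\vec{v}_2$ obtained by swapping $a$ and $b$, verifies directly from the PL product formula that $M_{\vec{v}_1}\ge M_{\vec{v}_2}$, and concludes from $\sum_{i<j}(\beta_i-\beta_j)(M_{\vec{v}_1}-M_{\vec{v}_2})\ge 0$. A probabilistic repair in your framework is also possible: keep $Y_a,Y_b$ independent, condition on the unordered pair $\{Y_a,Y_b\}=\{u,v\}$ with $u<v$ and on the remaining clocks, and check that $a$ receives the smaller value with conditional probability at least $1/2$; but as written your proof does not establish the case $a,b\in\mathcal{A}$.
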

\begin{proof}
By the definition of associate score, we have
\begin{align}
    \tau_a-\tau_b&=\frac{1}{\rho_{n,m}}\bigg{(}\sum_{\substack{\mathcal{A}^-\subseteq [n]\setminus\{a\}}}\sum_{t=1}^{m}\beta_{t}R_{a,\mathcal{A}^-}(t)\bigg{)}\notag\\
    &\qquad-\frac{1}{\rho_{n,m}}\bigg{(}\sum_{\substack{\mathcal{A}^-\subseteq [n]\setminus\{b\}}}\sum_{t=1}^{m}\beta_{t}R_{b,\mathcal{A}^-}(t)\bigg{)}\notag\\
    &=\frac{1}{\rho_{n,m}}\bigg{(}\sum_{\substack{\mathcal{A}\subseteq [n]\\a,b\in\cA}}(\sum_{i=1}^{m}\sumV{v_i=a}\beta_{i}M_{\vec{v}}-\sum_{j=1}^{m}\sumV{v_j=b}\beta_{j}M_{\vec{v}})\bigg{)}\notag\\
    &\qquad+\frac{1}{\rho_{n,m}}\bigg{(}\sum_{\substack{\mathcal{A}\subseteq [n]\\a\in\cA,b\notin\cA}}\sum_{i=1}^{m}\sumV{v_i=a}\beta_{i}M_{\vec{v}}\notag\\
    &\qquad-\sum_{\substack{\mathcal{A}\subseteq [n]\\b\in\cA,a\notin\cA}}\sum_{j=1}^{m}\sumV{v_j=b}\beta_{j}M_{\vec{v}}\bigg{)}.
\end{align}
First, we prove $\frac{1}{\rho_{n,m}}\bigg{(}\sum_{\substack{\mathcal{A}\subseteq [n]\\a,b\in\cA}}(\sum_{i=1}^{m}\sumV{v_i=a}\beta_{i}M_{\vec{v}}-\sum_{j=1}^{m}\sumV{v_j=b}\beta_{j}M_{\vec{v}})\bigg{)}\geq0$. For any $\mathcal{A}\subseteq [n]$ satisfying $a,b\in\cA$, we have
\begin{align}
    &\sum_{i=1}^{m}\sumV{v_i=a}\beta_{i}M_{\vec{v}}-\sum_{j=1}^{m}\sumV{v_j=b}\beta_{j}M_{\vec{v}}\notag\\
    =&\sum_{i,j=1}^{m}\sumV{v_i=a\\v_j=b}(\beta_{i}-\beta_j)M_{\vec{v}}\notag\\
    =&\sum_{\substack{i,j=1\\i<j}}^{m}\sumV{v_i=a\\v_j=b}(\beta_{i}-\beta_j)M_{\vec{v}}
    +\sum_{\substack{i,j=1\\i<j}}^{m}\sumV{v_i=b\\v_j=a}(\beta_{j}-\beta_i)M_{\vec{v}}\notag\\
    =&\sum_{\substack{i,j=1\\i<j}}^{m}\sum_{(\vec{v}_1,\vec{v}_2)\in\cC_{i,j}}(\beta_{i}-\beta_j)(M_{\vec{v}_1}-M_{\vec{v}_2}),
\end{align}
where $\cC_{i,j}:=\{(\vec{v}_1,\vec{v}_2)\in[n]^{2m}\text{ : } v_1(a)=i,  v_1(b)=j, v_2(a)=j,v_2(b)=i\text{, and } v_1(x)=v_2(x),\forall x\in\mathcal{A}\backslash\{a,b\}\}$.
\begin{align}
    &M_{\vec{v}_1}-M_{\vec{v}_2}=\M{1}-\M{2}\notag\\
    &=(\fprodT{k=1,k\neq i,j})\{f_i(\vec{v}_1)f_j(\vec{v}_1)-f_i(\vec{v}_2)f_j(\vec{v}_2)\}\notag\\
    &\qquad+(\fprod{k\leq i,k\geq j})\{\fprod{i<k<j}-\fprodT{i<k<j}\},
\end{align}
where the second equality follows from the fact that $f_k(\vec{v}_1)=f_k(\vec{v}_2)$, $\forall k<i$ and $\forall k>j$.
\begin{align}
    &f_i(\vec{v}_1)f_j(\vec{v}_1)-f_i(\vec{v}_2)f_j(\vec{v}_2)\notag\\
    =&\frac{w_a}{w_a+\sum_{t=i+1}^mw_{v_t}}\frac{w_b}{w_b+\sum_{t=j+1}^mw_{v_t}}\notag\\
    &\qquad-\frac{w_b}{w_b+\sum_{t=i+1}^mw_{v_t}}\frac{w_a}{w_a+\sum_{t=j+1}^mw_{v_t}}\geq0
\end{align}
Since $f_k(\vec{v}_1)\geq f_k(\vec{v}_2)$, for $i<k<j$, we have $M_{\vec{v}_1}-M_{\vec{v}_2}\geq0$, and thus $\frac{1}{\rho_{n,m}}\bigg{(}\sum_{\substack{\mathcal{A}\subseteq [n]\\a,b\in\cA}}(\sum_{i=1}^{m}\sumV{v_i=a}\beta_{i}M_{\vec{v}}-\sum_{j=1}^{m}\sumV{v_j=b}\beta_{j}M_{\vec{v}})\bigg{)}\geq0$.

Second, we need to prove $$\frac{1}{\rho_{n,m}}\bigg{(}\sum_{\substack{\mathcal{A}\subseteq [n]\\a\in\cA,b\notin\cA}}\sum_{i=1}^{m}\sumV{v_i=a}\beta_{i}M_{\vec{v}}-\sum_{\substack{\mathcal{A}\subseteq [n]\\b\in\cA,a\notin\cA}}\sum_{j=1}^{m}\sumV{v_j=b}\beta_{j}M_{\vec{v}}\bigg{)}\geq0$$
This is equivalent to proving that $\sum_{i=1}^{m}\sumV{v_i=a}\beta_{i}M_{\vec{v}}$ is a non-decreasing function in $w_a$. We denote that $\xi_{\mathcal{A}^-}(w_a):=\sum_{i=1}^{m}\sumV{v_i=a}\beta_{i}M_{\vec{v}}$. From the definition of $X_{a,\mathcal{A}^-}$, we have $\xi_{\mathcal{A}^-}(w_a)=\mathbb{E}X_{a,\mathcal{A}^-}$. Define $\vec{\mathcal{A}}$ as an ordered instance of $\mathcal{A}$, i.e., $\vec{\mathcal{A}}\doteq\mathcal{A}$. We also define the mapping $\phi_a:\mathbb{R}_+^{m}\rightarrow [0,1]$. Here $\phi_a(z_{\vec{\mathcal{A}}})$ represents the score item $a$ receives when the sampled results in Definition \ref{def:PL_Thurstone} is $z_{\vec{\mathcal{A}}}$. From the Thurstonian representation of the PL model, we have 
\begin{align*}
    \mathbb{E}X_{a,\mathcal{A}^-}&=\int_{z_{\vec{\mathcal{A}}}\in\mathbb{R}_+^{m}}\phi_a(z_{\vec{\mathcal{A}}})dF(z_{\vec{\mathcal{A}}})\\
    &=\int_{z_{\vec{\mathcal{A}}}\in\mathbb{R}_+^{m}}\phi_a(z_{\vec{\mathcal{A}}^-},z)\frac{1}{w_a}e^{-\frac{z}{w_a}}dF(z_{\vec{\mathcal{A}}^-})dz\\
    &=\int_{z_{\vec{\mathcal{A}}}\in\mathbb{R}_+^{m}}\phi_a(z_{\vec{\mathcal{A}}^-},w_ay)e^{-y}dF(z_{\vec{\mathcal{A}}^-})dy.
\end{align*}
For fixed $y$, if $w_a$ increases, $w_ay$ increases. This implies that the ranking result of item $a$ will be non-decreasing, thus $\phi_a(z_{\vec{\mathcal{A}}^-},w_ay)$ is non-decreasing in $w_a$, we then have  $\mathbb{E}X_{a,\mathcal{A}^-}$ is non-decreasing in $w_a$, which concludes the proof.
\end{proof}

We next provide the technical proofs for several auxiliary lemmas and that for Proposition \ref{prop:converse2}. 

\begin{proof}[Proof of Lemma \ref{lemma:variancebound}]
We first calculate the variances of the centralized random variables as follows. For any $\mathcal{A}^-\in{[n]}\backslash\left\{a,b\right\}$ and $\left|\mathcal{A}^-\right|=m-1$, we have for the centralized scores
\begin{align}
\mathbb{E}\left[\left(\bar{X}_{a,\mathcal{A}^-}^{(\ell)}\right)^2\right]
&=\mathbb{E}\left[\left(X_{a,\mathcal{A}^-}^{(\ell)}\right)^2\right]-\mathbb{E}\left[X_{a,\mathcal{A}^-}^{(\ell)}\right]^2\nonumber\\
&=p\sum_{t=1}^{m}\beta_t^2R_{a,\mathcal{A}^-}(t)-(p\sum_{t=1}^{m}\beta_tR_{a,\mathcal{A}^-}(t))^2\nonumber\\
\label{eqn:bound of variance}
&\leq p\sum_{t=1}^{m}\beta_tR_{a,\mathcal{A}^-}(t)-(p\sum_{t=1}^{m}\beta_tR_{a,\mathcal{A}^-}(t))^2.
\end{align}
The inequality in (\ref{eqn:bound of variance}) holds because for any $i\in[m]$, we have $\beta_i\leq 1$, and thus $\beta_i^2\leq\beta_i$.
Then similarly, for variance of centralized score related to item $b$,
\begin{align}
\mathbb{E}\left[\left(\bar{X}_{b,\mathcal{A}^-}^{(\ell)}\right)^2\right]
&\leq p\sum_{t=1}^{m}\beta_tR_{b,\mathcal{A}^-}(t)
\end{align}
  
Moreover, we have for any $a,b,\mathcal{A}^{--}$ 
\begin{align}
&\mathbb{E}\left[\left(\bar{X}_{\{a,b\},\mathcal{A}^{--}}^{(\ell)}\right)^2\right]\nonumber\\
&= \mathbb{E}\left[\left(X^{(\ell)}_{a,\{b,\mathcal{A}^{--}\}}-X^{(\ell)}_{b,\{a,\mathcal{A}^{--}\}}\right)^2\right]\notag\\
&\qquad-\left(\mathbb{E}X^{(\ell)}_{a,\{b,\mathcal{A}^{--}\}}-\mathbb{E}X^{(\ell)}_{b,\{a,\mathcal{A}^{--}\}}\right)^2\nonumber\\
&\leq p-(p\sum_{t=1}^m\beta_{t} R^t_{a,\{b,\mathcal{A}^{--}\}}-p\sum_{t=1}^m\beta_{t}R^t_{b,\{a,\mathcal{A}^{--}\}})^2.
\end{align}
\\


Thus we have the following bound on the sum of the variances of centralized scores and the cross-scores
\begin{align}
\label{eqn:var_bound}
&\Sigma_{\bar{X}_{a,b}}\triangleq\sum_{\substack{\mathcal{A}^-\subseteq[n]\setminus\{a,b\}\\|\mathcal{A}^-|=m-1}}\left(\mathbb{E}\left(\bar{X}_{b,\mathcal{A}^-}^{(\ell)}\right)^2+\mathbb{E}\left(\bar{X}_{a,\mathcal{A}^-}^{(\ell)}\right)^2\right)\notag\\
&\qquad+\sum_{\substack{\mathcal{A}^{--}\subseteq [n]\setminus\{a,b\}\\|\mathcal{A}^{--}|=m-2}}\mathbb{E}\left(\bar{X}_{\{a,b\},\mathcal{A}^{--}}^{(\ell)}\right)^2\nonumber\\
&\leq p\sum_{\substack{\mathcal{A}^-\subseteq{[n]}\setminus\left\{a,b\right\}\\\left|\mathcal{A}^-\right|=m-1}}\sum_{t=1}^{m}\beta_{t}R_{a,\mathcal{A}^-}(t)\nonumber\\
&\qquad-\sum_{\substack{\mathcal{A}^-\subseteq{[n]}\setminus\left\{a,b\right\}\\\left|\mathcal{A}^-\right|=m-1}}(p\sum_{t=1}^{m}\beta_t R_{a,\mathcal{A}^-}(t))^2\notag\\
&\qquad+p\sum_{\substack{\mathcal{A}^-\subseteq{[n]}\setminus\left\{a,b\right\}\\\left|\mathcal{A}^-\right|=m-1}}\sum_{t=1}^{m}\beta_{t}R_{b,\mathcal{A}^-}(t) +\sum_{\substack{\mathcal{A}^{--}\subseteq{[n]}\setminus\left\{a,b\right\}\\\left|\mathcal{A}^{--}\right|=m-2}}p\nonumber\\
&- \sum_{\substack{\mathcal{A}^{--}\subseteq{[n]}\setminus\left\{a,b\right\}\\\left|\mathcal{A}^{--}\right|=m-2}}p^2(\sum_{t=1}^m\beta_{t} (R_{a,\{b,\mathcal{A}^{--}\}}(t)-R_{b,\{a,\mathcal{A}^{--}\}}(t)))^2.
\end{align}
Define $S^{(i)}_{a,b\mathcal{A}^{--}}=p(\sum_{t=1}^m\beta_{t} (R_{a,\{b,\mathcal{A}^{--}\}}(t)-R_{b,\{a,\mathcal{A}^{--}\}}(t)))$ for notation simplicity. By the definition of associate score, we get
\begin{align}
    &\Sigma_{\bar{X}_{a,b}}\leq p\rho_{n,m}(\tau_a+\tau_b)-\sum_{\substack{\mathcal{A}^-\subseteq{[n]}\setminus\left\{a,b\right\}\\\left|\mathcal{A}^-\right|=m-1}}(p\sum_{t=1}^{m}\beta_t R_{a,\mathcal{A}^-}(t))^2\notag\\
&+ \sum_{\substack{\mathcal{A}^{--}\subseteq{[n]}\setminus\left\{a,b\right\}\\\left|\mathcal{A}^{--}\right|=m-2}}(S^{(i)}_{a,b\mathcal{A}^{--}}-(S^{(i)}_{a,b\mathcal{A}^{--}})^2)\notag\\
&-2p\sum_{\substack{\mathcal{A}^{--}\subseteq{[n]}\setminus\left\{a,b\right\}\\\left|\mathcal{A}^{--}\right|=m-2}}\sum_{t=1}^{m}\beta_{t}R
_{a,b,\mathcal{A}^{--}}(t)+p\binom{n-2}{m-2}.
\end{align}
By the fact that $\tau_a-\tau_b\geq\Delta_k$, we can get
\begin{align}
&\Sigma_{\bar{X}_{a,b}}\nonumber\\
&\leq 2p
\sum_{\substack{\mathcal{A}^-\subseteq{[n]}\setminus\left\{a,b\right\}\\\left|\mathcal{A}^-\right|=m-1}}\sum_{t=1}^{m}\beta_{t}R_{a,\mathcal{A}^-}(t)\notag\\
&-\sum_{\substack{\mathcal{A}^-\subseteq{[n]}\setminus\left\{a,b\right\}\\\left|\mathcal{A}^-\right|=m-1}}(p\sum_{t=1}^{m}\beta_tR_{a,\mathcal{A}^-}(t))^2-p\rho_{n,m}\Delta_k\nonumber\\
&+\sum_{\substack{\mathcal{A}^{--}\subseteq{[n]}\setminus\left\{a,b\right\}\\\left|\mathcal{A}^{--}\right|=m-2}}(S^{(i)}_{a,b\mathcal{A}^{--}}-(S^{(i)}_{a,b\mathcal{A}^{--}})^2)+p\binom{n-2}{m-2}.
\end{align}

Since  $ \sum_{t=1}^{m}\beta_tR_{i,\mathcal{A}^-}(t)\leq\sum_{t=1}^{m}R_{i,\mathcal{A}^-}(t)=1$
holds for both $i=a$ and $i=b$, we further have 
\begin{align}
  -p\leq\left(p\sum_{t=1}^{m}\beta_{t}R
_{a,b,\mathcal{A}^{--}}(t)-p\sum_{t=1}^{m}\beta_{t}R_{b,a,\mathcal{A}^{--}}(t)\right)\leq p.
\end{align}
 According to the assumption in Lemma \ref{lemma:forward1}, $p\leq\frac{1}{2}$ we can get that
\begin{align}
    S^{(i)}_{a,b\mathcal{A}^{--}}-(S^{(i)}_{a,b\mathcal{A}^{--}})^2\leq p-p^2.
\end{align}
Similarly, we can obtain that
\begin{equation}
    2p\sum_{t=1}^{m}\beta_{t}R_{a,\mathcal{A}^-}(t)-(p\sum_{t=1}^{m}\beta_tR_{a,\mathcal{A}^-}(t))^2\leq 2p-p^2.
\end{equation}
It follows that the total variance can be bounded as
\begin{align}
\Sigma_{\bar{X}_{a,b}}&\leq\left[\sum_{\substack{\mathcal{A}^-\subseteq{[n]}\setminus\left\{a,b\right\}\\\left|\mathcal{A}^-\right|=m-1}}(2p-p^2)\right]-p\rho_{n,m}\Delta_k+p\binom{n-2}{m-2}\nonumber\\
&\qquad+\left[ \sum_{\substack{\mathcal{A}^{--}\subseteq{[n]}\setminus\left\{a,b\right\}\\\left|\mathcal{A}^{--}\right|=m-2}}(p-p^2)\right]\nonumber\\
&=(2p-p^2)\rho_{n,m}-p\rho_{n,m}\Delta_k.
\end{align}
This is the desired result, and the proof is complete. 
\end{proof}

\begin{proof}[Proof of Lemma \ref{prop:converse1}]
Let us consider the associated score of item-$k$ using the distribution $M^*$, which can be seen as
\begin{align}
\tau_{(k)}&=\frac{1}{m\binom{n-1}{m-1}}{n \choose m-1}\sum_{t=1}^{m}\beta_{t}\notag\\
&\quad+\frac{\delta}{\binom{n-1}{m-1}}\frac{1}{m!}\sum_{t=1}^{q}(\beta_{t}-\beta_{m-q+t})A^{q-1}_{k-1}A_{n-k}^{m-q}
\end{align}
Similarly, we have 
\begin{align}
\tau_{(k+1)}&=\frac{1}{\binom{n-1}{m-1}}\frac{1}{m}{n \choose m-1}\sum_{t=1}^{m}\beta_{t}\notag\\
&\quad+\frac{\delta}{\binom{n-1}{m-1}}\frac{1}{m!}\sum_{t=q+1}^{m}(\beta_{t}-\beta_{t-q})A^{q}_{k}A_{n-k-1}^{m-1-q}.
\end{align}
It follows that
\begin{align}
\Delta_k=\tau_{(k)}-\tau_{(k+1)},
\end{align}
which leads to the quantity in (\ref{eqn:Deltak}), after the first terms in both $\tau_{(k)}$ and $\tau_{(k+1)}$ cancel out each other.
\end{proof}

\begin{proof}[Proof of Proposition \ref{prop:converse2}]
For notation convenience, we shall also define
\begin{align*}
&\mu(\delta)\triangleq p\left[(1+\delta)\log (1+\delta)+(1-\delta)\log (1-\delta)\right]\nonumber\\
&\upsilon(\delta)\triangleq p\left[\log (1+\delta)^{-1}+\log (1-\delta)^{-1}\right],\\
&\omega_{m,q}\triangleq {m\choose q}^{-1}.
\end{align*}

Using the property of the KL-divergence, we can write
\begin{align}
&D_{\text{KL}}(\mathbb{P}^a||\mathbb{P}^b)=\sum_{\ell\in[r]} p \sum_{\mathcal{A}\subseteq [n]:|\mathcal{A}|=m}D_{\text{KL}}(\mathbb{P}^a(V_{\mathcal{A}}^{(\ell)})||\mathbb{P}^b(V_{\mathcal{A}}^{(\ell)}))\nonumber\\
&\qquad\qquad=r p \sum_{\mathcal{A}\subseteq [n]:|\mathcal{A}|=m} D_{\text{KL}}(\mathbb{P}^a(V_{\mathcal{A}}^{(\ell)})||\mathbb{P}^b(V_{\mathcal{A}}^{(\ell)})),
\end{align}
where $V_{\mathcal{A}}^{(\ell)}$ is the result of the comparison among the elements in $\mathcal{A}$. It is clear that we need to bound the KL divergence for each subset $\mathcal{A}\subseteq[n]$, which can be done as follows.

\begin{enumerate}
\item If $a,b\notin \mathcal{A}$, then clearly $D_{\text{KL}}(\mathbb{P}^a(V_{\mathcal{A}}^{(\ell)})||\mathbb{P}^b(V_{\mathcal{A}}^{(\ell)}))=0$;
\item If $a\in \mathcal{A}$ but $b\notin \mathcal{A}$, 
\begin{enumerate}
\item If $\mathcal{A}\cap[k-1]=q-1$, we have
\begin{align}
&D_{\text{KL}}(\mathbb{P}^a(V_{\mathcal{A}}^{(\ell)})||\mathbb{P}^b(V_{\mathcal{A}}^{(\ell)}))=\frac{q!(m-q)!}{(m!)}\mu(\delta)\nonumber\\
&\qquad\qquad\leq 2\omega_{m,q}p\delta^2\leq 2\omega_{m,q}\frac{p\delta^2}{1-\delta^2}.
\end{align}
\item If $\mathcal{A}\cap[k-1]=q$, we have
\begin{align}
&D_{\text{KL}}(\mathbb{P}^a(V_{\mathcal{A}}^{(\ell)})||\mathbb{P}^b(V_{\mathcal{A}}^{(\ell)}))=\frac{q!(m-q)!}{m!}\upsilon(\delta)\nonumber\\
&\leq \omega_{m,q}p\left(\frac{-\delta}{1+\delta}+\frac{\delta}{1-\delta}\right)=2\omega_{m,q}p\frac{\delta^2}{1-\delta^2}.
\end{align}
\item If $\mathcal{A}\cap[k-1]\neq q-1$ and $\mathcal{A}\cap[k-1]\neq q$, then
\begin{equation}
D_{\text{KL}}(\mathbb{P}^a(V_{\mathcal{A}}^{(\ell)})||\mathbb{P}^b(V_{\mathcal{A}}^{(\ell)}))=0. 
\end{equation}
\end{enumerate}
\item If $b\in \mathcal{A}$ but $a\notin \mathcal{A}$, 
\begin{enumerate}
\item If $\mathcal{A}\cap[k-1]=q$, we have
\begin{align}
&D_{\text{KL}}(\mathbb{P}^a(V_{\mathcal{A}}^{(\ell)})||\mathbb{P}^b(V_{\mathcal{A}}^{(\ell)}))=\frac{q!(m-q)!}{m!}\mu(\delta)\nonumber\\
&\qquad\leq 2\omega_{m,q}p\delta^2\leq 2\omega_{m,q}p\frac{\delta^2}{1-\delta^2}.
\end{align}
\item If $\mathcal{A}\cap[k-1]=q-1$, we have
\begin{align}
&D_{\text{KL}}(\mathbb{P}^a(V_{\mathcal{A}}^{(\ell)})||\mathbb{P}^b(V_{\mathcal{A}}^{(\ell)}))=\frac{q!(m-q)!}{m!}\upsilon(\delta)\nonumber\\
&\qquad\qquad\leq 2\omega_{m,q}p\frac{\delta^2}{1-\delta^2}.
\end{align}
\item If $\mathcal{A}\cap[k-1]\neq q-1$ and $\mathcal{A}\cap[k-1]\neq q$, then $D_{\text{KL}}(\mathbb{P}^a(V_{\mathcal{A}}^{(\ell)})||\mathbb{P}^b(V_{\mathcal{A}}^{(\ell)}))=0$. 
\end{enumerate}
\item If $a,b\in \mathcal{A}$,
\begin{enumerate}
\item If $\mathcal{A}\cap[k-1]=q-1$, and $q>1$,
\begin{align}
&D_{\text{KL}}(\mathbb{P}^a(V_{\mathcal{A}}^{(\ell)})||\mathbb{P}^b(V_{\mathcal{A}}^{(\ell)}))=\frac{q!(m-q)!}{m!}[\mu(\delta)+\upsilon(\delta)]\nonumber\\
&\qquad\leq 4\omega_{m,q}p\delta^2\leq 4\omega_{m,q}p\frac{\delta^2}{1-\delta^2}.
\end{align}
If $\mathcal{A}\cap[k-1]=q-1$, and $q=1$,
\begin{align}
&D_{\text{KL}}(\mathbb{P}^a(V_{\mathcal{A}}^{(\ell)})||\mathbb{P}^b(V_{\mathcal{A}}^{(\ell)}))\nonumber\\
&=\frac{(m-2)(m-2)!}{(m!)}[\mu(\delta)+\upsilon(\delta)]\nonumber\\
&\,\,+\frac{(m-2)!}{(m!)}p[(1-\delta)\log \frac{1-\delta}{1+\delta}+(1+\delta)\log\frac{1+\delta}{1-\delta}]\nonumber\\
&\leq 8\omega_{m,q}p\frac{\delta^2}{1-\delta^2}.
\end{align}
Thus for all $q$, we have
\begin{align}
&D_{\text{KL}}(\mathbb{P}^a(V_{\mathcal{A}}^{(\ell)})||\mathbb{P}^b(V_{\mathcal{A}}^{(\ell)}))\leq 8\omega_{m,q}p\frac{\delta^2}{1-\delta^2}.
\end{align}
\item $\mathcal{A}\cap[k-1]\neq q-1$, then $D_{\text{KL}}(\mathbb{P}^a(V_{\mathcal{A}}^{(\ell)})||\mathbb{P}^b(V_{\mathcal{A}}^{(\ell)}))=0$. 
\end{enumerate}
\end{enumerate}

Summarizing the above bound, we arrive at
\begin{align}
&D_{\text{KL}}(\mathbb{P}^a||\mathbb{P}^b)=r p\sum_{\mathcal{A}\subseteq [n]:|\mathcal{A}|=m} D_{\text{KL}}(\mathbb{P}^a(V_{\mathcal{A}}^{(1)})||\mathbb{P}^b(V_{\mathcal{A}}^{(1)}))\nonumber\\
&\leq r\frac{4\delta^2\omega_{m,q}p}{(1-\delta^2)}\left[{k-1 \choose q-1}{n-k-1 \choose m-q}\right.\nonumber\\
&\quad\left.+{k-1 \choose q}{n-k-1 \choose m-q-1}+2{k-1\choose q-1}{n-k-1 \choose m-q-1}\right]\nonumber\\
&\leq r\frac{4\delta^2\omega_{m,q}p}{(1-\delta^2)} \left[{k-1 \choose q-1}{n-k \choose m-q}+{k \choose q}{n-k-1 \choose m-q-1}\right]
\end{align}
which is the desired result. 
\end{proof}

\end{document}